\newcommand{\ds}{\displaystyle}
\newtheorem{theorem}{Theorem} [section]
\newtheorem{lemma}[theorem]{Lemma}
\newtheorem{corollary}[theorem]{Corollary}
\newtheorem{proposition}[theorem]{Proposition}
\newtheorem{example}[theorem]{Example}
\newtheorem{remark}[theorem]{Remark}
\newcommand{\norm}[1]{||} %norm of a vector
\newcommand{\inn}[1]{\langle#1\rangle} %inner product
\newcommand{\biginn}[1]{\big\langle#1\big\rangle} %\big version
\newcommand{\dd}{{\mathrm d}}  %differential
\newcommand{\RR}{{\bf R}}  %real numbers
\newcommand{\CC}{{\bf C}}  %complex numbers
\newcommand{\MM}{{\bf M}}  %Minkowski
\newcommand{\Ll}{{\mathcal L}}  %Lie derivative
\newcommand{\Gg}{{\mathcal G}}  %metric on twistor space
\newcommand{\Rr}{{\mathcal R}}  %upper halfspace (Ch1), curv op
\newcommand{\Ss}{{\mathcal S}}  %space of geodesics
\newcommand{\Oo}{{\mathcal O}}  %orbifold
\newcommand{\Nn}{{\mathcal N}}  %normal projection
\newcommand{\Mm}{{\mathcal M}}  %mean over boundary of ball
\newcommand{\CP}{{\bf C}P} %complex projective space
\newcommand{\HP}{{\mathbb H}P} %quaternionic projective space
\newcommand{\ii}{{\rm i}}  %root of -1
\newcommand{\jj}{{\rm j}}  %quaternion j
\newcommand{\ra}{\rightarrow}
\newcommand{\pa}{\partial}
\renewcommand{\phi}{\varphi}
\newcommand{\la}{\lambda}
\newcommand{\na}{\nabla}
\newcommand{\al}{\alpha}
\newcommand{\be}{\beta}
\newcommand{\ga}{\gamma}
\newcommand{\Ga}{\Gamma}
\newcommand{\ve}{\varepsilon}
\newcommand{\si}{\sigma}
\newcommand{\Si}{\Sigma}
\newcommand{\om}{\omega}
\newcommand{\Om}{\Omega}
\newcommand{\wt}{\widetilde} %usually better than \tilde
\newcommand{\ov}{\overline} %usually better than \ov
\newcommand{\wh}{\widehat} %usually better than \hat
\renewcommand{\Re}{{\rm Re}\,} %real part of a complex number
\newcommand{\grad}{\mbox{\rm grad}\,}
\newcommand{\Tr}{\mbox{\rm Tr}\,}
\newcommand{\Ric}{\mbox{\rm Ricci}\,}
\newcommand{\sym}{\mbox{\rm sym}\,}
\begin{document}

\title[Shear-free ray congruences]
{Shear-free ray congruences on curved space-times}

%\arxurl{<hep_reference_#>}

\author[P. Baird, M. Wehbe]{Paul Baird and Mohammad Wehbe}

\address{D\'epartement de Math\'ematiques \\
Universit\'e de Bretagne Occidentale \\
6 av.\ Victor Le Gorgeu -- CS 93837 \\
29238 Brest Cedex, France}  %lines should be separated with double backslashes: \\
\addressemail{Paul.Baird@univ-brest.fr, Mohammad.Wehbe@univ-brest.fr}

\begin{abstract}
A shear-free ray congruence on Minkowski space is a $3$-parameter family of null geodesics along which Lie transport of a complementary $2$-dimen\-sio\-nal spacelike subspace (called the screen space) is conformal.  Such congruences are defined by  complex analytic surfaces in the associated twistor space $\CP^3$ and are the basis of the construction of massless fields.  On a more general space-time, it is unclear how to couple the massless field with the gravitational field.  In this article we do this by considering the following Cauchy-type problem: given a Riemannian $3$-manifold $(M^3, g_0)$ endowed with a unit vector field $U_0$ that is tangent to a conformal foliation, we require that the pair extend to a space-time $(\Mm , \Gg)$ endowed with a spacelike unit vector field $U_t$ in such a way that $U_t$ simultaneously generates null geodesics and is tangent to a conformal foliation on spacelike slices $t=$ const.  
\end{abstract}

\maketitle

\tableofcontents

\section{Introduction}
\label{intro}
Twistor theory, as developed by R. Penrose and others over the last forty years, provides a fundamentally different perspective to the quantisation of physical fields, by considering the twistor space, the space of null geodesics, as the more basic object from which space-time points should be derived.  The program has had much success as well as some difficulties, particularly in its inability to adapt in any natural way, to general curved space-times.  The article \cite{Pe-3} provides background and motivation to the subject. 

One of the appealing aspects of the twistor approach, is the way complex analytic geometry comes into play.  Specifically, by what is essentially the Klein-Pl\"ucker correspondance, a light ray in Minkowsi space $\MM^4$ corresponds to a point in a $5$-dimensional CR-submanifold $\Nn^5$ of $\CP^3$.  To make the correspondance bijective, one needs to conformally compactify Minkowski space by adding a light-cone at infinity to obtain the manifold $\ov{\MM}^4$ diffeomorphic to $S^1 \times S^3$.  The picture can be unified by introducing the flag manifold ${\bf F}_{12}$ of pairs $(\ell , \Pi )$ consisting of resp. $1$- and $2$-dimensional subspaces of $\CC^4$ with $\ell \subset \Pi$ and considering the double fibration:
$$
\begin{array}{ccccc}
 & & {\bf F}_{12} & & \\
 & \swarrow & & \searrow & \\
\CP^3 & & & & G_2(\CC^4) \\
\cup & & & & \cup \\
\Nn^5 & & & & \ov{\MM}^4
\end{array}
$$ 
where $G_2(\CC^4)$ is the Grassmannian of complex $2$-dimensional subspaces of $\CC^4$ and where the left projection is given by $(\ell , \Pi ) \mapsto \ell$ and the right by $(\ell , \Pi ) \mapsto \Pi$.  

In general, a complex analytic surface $\Ss \subset \CP^3$ which intersects $\Nn^5$ defines a $3$-(real) parameter family of geodesics in a region $A\subset \MM^4$ (called a congruence) which is without shear.  That is, if $W$ represents the tangent vector field to the congruence, then at a point $x\in A$, the metric complement $W^{\bot}$ is $3$ dimensional and contains $W$ itself; if we take a $2$-dimensional spacelike complement $S$ in $W^{\bot}$, then for the congruence to be \emph{shear-free}, we require Lie transport of vectors in $S$ along $W$ to be conformal.  Such congruences can be used to generate solutions of the massless field equations.  This arises by contour integration in the twistor space and is the basis of the Penrose transform, which is an integral transform from sheaf cohomology in the twistor space into the space of massless fields, see \cite{Ea,Wa-We} for details.

If we consider Minkowski space $\MM^4$ endowed with standard coordinates $(t= x^0, x^1, x^2, x^3)$, then in \cite{Ba-Wo-5} it was shown how an SFR corresponds to a semi-conformal mapping (see Section \ref{sec:2} for the definition) on a space-like slice $\RR^3_t:= \{ (t, x^1, x^2, x^3)\in \MM^4: t \ {\rm constant}\}$.  The argument given in \cite{Ba-Wo-5} is rather indirect and depends on the analyticity of the SFR; it is essentially as follows.  An SFR defines a complex analytic surface $\Ss$ in $\CP^3$.  If we now regard $\CP^3$ as the (Hopf) bundle whose fibre at each point $x\in S^4$ is the set of almost Hermitian structures at $x$,  then $\Ss$ determines an integrable complex structure $J$ on a subset $V$ of $S^4$.  If we consider $S^4$ as being made up of two $4$-balls joined along an equatorial $3$-sphere $S^3\subset S^4$, then on endowing one of these balls $B^4$ with the Poincar\'e metric of constant curvature $-1$, by a theorem of J. C. Wood \cite{Wo}, the complex structure $J$ determines a harmonic morphism on $V$ (cf. \cite{Ba-1,Ba-Wo-6}) which extends to a semi-conformal mapping on $V \cap \pa B^4$ \cite{Ba-Wo-5}.  Provided one is careful with conventions, $\pa B^4 = S^3$ corresponds to the space-like slice $\RR^3_0$. In Section \ref{sec:2}, we give a direct argument that an SFR corresponds to an evolving family of complex-valued semi-conformal mappings on space-like slices $\RR^3_t$.  

This point of view is readily adapted to a more general construction:  we propose to replace the SFR with an evolving family of unit vector fields $U_t$ on a $3$-manifold $M^3$ which is coupled to a family of Riemannian metrics $g_t$ on $M^3$.  We require that the pair $(g_t, U_t)$ generate a space-time metric of the form $\Gg = - f\dd t^2 + g_t$, for some function $f \in C^{\infty}(I \times M^3)$ ($I \subset \RR$), in such a way that $U_t$ determines a null vector field $\pa_t + fU$ which is both geodesic and shear-free.  Here, we have to be careful about what we mean by shear-free, for there are two natural definitions which in general only coincide when $\Gg$ is the flat metric.  For the first of these, we consider a ray congruence along which Lie transport of vectors in a complementary spacelike $2$-space is conformal, i.e. neighbouring rays are without shear; this is the usual notion of a shear-free ray congruence (SFR) as discussed above in the case of Minkowski space.  

An alternative notion is to require that for each $t$, the vector field $U_t$ be tangent to a conformal foliation (or shear-free foliation) on $(M^3, g_t)$.  In order to formulate a well-posed problem, we still require that $U_t$ generate null geodesics and will refer to the corresponding congruence as one that \emph{generates conformal foliations} (CFGR: ``conformal foliation generating ray congruence").  It should be remarked that if a vector field $U$ is tangent to a conformal foliation on a domain $A$ of a Riemannian $3$-manifold $(M^3, g)$, then we can locally integrate it to a semi-conformal map $\phi : V \subset A \ra \CC$, defined up to post-composition with a conformal tansformation.  Thus the vector field $U_t$ and the map $\phi_t$ are locally interchangeable.  Furthermore, a semi-conformal map is absolutely minimizing for a natural functional, and we will see in Section \ref{sec:final} that the family $\{\phi_t\}$ of semi-conformal maps associated to the family $\{U_t\}$ satisfy a differential equation somewhat akin to the Schr\"odinger equation.

Our construction is analogous to the Cauchy problem, usually formulated to generate solutions to the Einstein equations, see \cite{Ba-Is} for an overview.  However, rather than produce an Einstein manifold, we wish to generate a space-time that supports a CFGR (or an SFR) for a given initial condition $(M^3, g_0, U_0)$, where $U_0$ is tangent to a conformal foliation on $(M^3, g_0)$.  Thus, instead of defining a field with respect to a fixed background metric, \emph{we suppose the field $\phi_t$ and the metric $g_t$ are coupled so as to produce a CFGR}. 
We are led to consider the evolution:      
\begin{equation} \label{evol-1}
\left\{ \begin{array}{llll}
{\rm (i)} & \ds\quad \frac{\pa U}{\pa t} & = &  - f\na^g_UU + \grad_gf \\
{\rm (ii)} & \ds\quad \frac{\pa g}{\pa t} & = & - 2\theta \odot (\dd f\vert_{TM}) + 2T
\end{array} \right.
\end{equation}
with the above initial conditions.
Here, $\theta = g(U, \ )$ is the $1$-form dual to $U$ and $T$ is a $t$-dependent symmetric covariant $2$ tensor on $M^3$ satisfying the property $T(U,U) = 0$ at each point.  Our key objective is to choose $f$ and $T$ in such a way that a solution produces a CFGR on the space-time $(\Mm^4 = I \times M^3, \Gg = - f^2 \dd t^2 + g_t)$.  In Theorem \ref{thm:main}, we give condition when this is the case.

We study in particular, \emph{fixed points of the evolution}, which occur when $g_0$ is conformally equivalent to a metric with respect to which the vector field $U_0$ is tangent to a conformal foliation by geodesics; the case of \emph{constant curvature}, when we can generate explicit solutions; the case when the \emph{distibution on $M^3$ orthogonal to $U_0$ is integrable}--remarkably, this property is preserved under the flow and the metric evolves according to a modified Ricci flow (Theorem \ref{thm:integrable}).  We also study a similar coupled evolution with given initial data, in order to generate a space-time supporting an SFR (Theorem \ref{thm:SFR}).  In the final section we discuss the functional representation of our construction, its relation to physical fields and conformal invariance.

\section{Shear-free ray congruences on Minkowski space-time}
\label{sec:1}
In order to describe light rays in Minkowski space, we adopt the spinor notation of Penrose \cite{Pe-1}.  Thus, to a (real) vector $(x^a) = (t=x^0, x^1, x^2, x^3) \in \MM^4$, we associate the Hermitian matrix:
$$
X = (x^{AA'}) = \frac{1}{\sqrt{2}}\left( \begin{array}{cc} t + x^1 & x^2 + \ii x^2 \\ x^2 - \ii x^3 & t - x^1 \end{array} \right) \qquad (A, A' \in \{ 0, 1\} ) 
$$
In particular, $\det X = t^2 - (x^1)^2 - (x^2)^2 - (x^3)^2$ is the Minkowski norm so that $\det X = 0$ if and only if $x^{AA'} = \xi^A\ov{\xi}^{A'}$, where to the spinor $\xi^A \in \CC^2$ we associate its conjugate $\ov{\xi}^{A'}$ whose components are given by $\ov{\xi}^{0'} = \ov{\xi^0}, \ov{\xi}^{1'} = \ov{\xi^1}$.  Indices are raised and lowered using the skew forms:
$$
(\ve_{AB}) = (\ve^{AB}) = (\ve_{A'B'}) = (\ve^{A'B'}) = \left( \begin{array}{rr} 0 & 1 \\ -1 & 0 \end{array}\right)
$$
and as usual we sum over repeated indices.

The partial derivatives $\na_{AA'}$ are defined by
$$
(\na_{AA'}) = \frac{1}{\sqrt{2}} \left( \begin{array}{cc} \pa_0 + \pa_1 & \pa_2 - \ii \pa_3 \\ \pa_2 + \ii \pa_3 & \pa_0 - \pa_1 \end{array}\right)
$$
where we write $ \pa_j = \pa / \pa x^j$, so that $\na_{AA'}x^{BB'} = \delta_A^B \delta_{A'}^{B'}$ where $\delta_A^B$ and $\delta_{A'}^{B'}$ are the Kr\"onecker delta symbols.

A shear-free ray congruence (SFR) is a congruence of null geodesics such that parallel transport of a $2$-dimensional complementary spacelike subspace (the \emph{screen space}) along each geodesic of the congruence is conformal (i.e. is without shear).  Such a congruence is described by a spinor field $\mu^A(x)$, $(x\in M^4)$ satisfying the equation
\begin{equation} \label{SFR}
\mu^A\mu^B\na_{AA'}\mu_B = 0\,.
\end{equation}
Explicitly, at each point $x \in M^4$, the vector $[\mu^0(x), \mu^1(x)]\in \CP^1 \sim S^2$ defines the direction of the light ray passing though $x$ as a point in the celestial sphere $S^2$.

The equations for an SFR are invariant by a rescaling $\mu^A(x) \mapsto \la (x)\mu^A(x)$ for some scalar function $\la$, as is apparent when we write (\ref{SFR}) in the equivalent form:
\begin{equation} \label{SFR-2}
\left\{ \begin{array}{lll}
\mu \na_{00'}\mu + \na_{10'}\mu & = & 0 \\
\mu\na_{01'}\mu + \na_{11'}\mu & = & 0
\end{array}
\right.
\end{equation}
where $\mu = \mu^0/\mu^1$.  

Let $\Nn^5\subset \CP^3$ be defined by the homogeneous equation:
\begin{equation}\label{N5}
\xi_0\ov{\eta^{0'}} + \xi_1\ov{\eta^{1'}} + \ov{\xi_0}\eta^{0'} + \ov{\xi_1}\eta^{1'}=0
\end{equation}
in terms of homogeneous coordinates $(\xi_A, \eta^{A'}) \in \CC^4$.
The correspondance between a point $[\xi_A, \eta^{A'}]\in \Nn^5$ ($\eta^{A'} \neq 0$) and points of $M^4$ are given by the incidence relation 
\begin{equation} \label{incidence}
\xi_A = \ii x_{AA'}\eta^{A'}
\end{equation}
This is a matter of linear algebra.  In general solutions to (\ref{incidence}) will be complex.  A solution $x_{AA'}$ is real if and only if the corresponding matrix $(x_{AA'})$ is Hermitian.  Under this assumption, on taking the complex conjugate of (\ref{incidence}), we obtain four equation in the four unknowns $x_{AA'}$, which we may abbreviate by $Ax = b$, where the coefficients of $A$ are given by $\eta^{A'}$ and their conjugates and $b$ is given by $\xi_{A}$ and their conjugates.  Since the determinant of $A$ vanishes, there will be either no solutions or infinitely many solutions.  By Cramer's rule, the latter case occurs if and only if the determinant of the matrix obtained from $A$ by replacing the last column by $b$, vanishes.  It is straighforward to check that this is precisely the condition (\ref{N5}).  Note that if $x_{AA'}$ is a real solution to (\ref{incidence}), then so is $x_{AA'} + \la \ov{\eta}_A\eta_{A'}$, for all $\la \in \RR$.  This is the light ray passing through $x^{AA'}$ with direction $\eta^{A'}$.  To complete the correspondence, one is required to compactify $\MM^4$ by adding a light cone at infinity to obtain the space $\ov{\MM}^4$ diffeomorphic to $S^1 \times S^3$.  Then the points $[\xi_A, 0] \in \Nn^5$ (a copy of $\CP^1$) are mapped to null rays generating this cone, see \cite{Pe-1} for details.

We can obtain $\Nn^5$ another way.  The Hopf fibration $\pi : \CP^3 \ra S^4$ is the map given by
$$
\pi ([z_1, z_2, z_3, z_4]) = [z_1 + z_2 \jj , z_3 + z_4 \jj ] \in \HP^1
$$
where for clarity, we now use homogeneous coordinates $[z_1, z_2, z_3, z_4]$ for points of $\CP^3$ and where $\HP^1$ is the quaternionic projective space.  On identifying $\HP^1$ with $S^4$ and letting $S^3$ be the equatorial $3$-sphere given by $\Re [z_1 + z_2 \jj , z_3 + z_4 z\jj ] = 0$, we see that $\pi ([z_1, z_2, z_3, z_4]) \in S^3$ if and only if 
$$
z_1 \ov{z_3} + z_2 \ov{z_4} + \ov{z_1} z_3 + \ov{z_2}z_4 = 0\,,
$$
which is once more the condition (\ref{incidence}).  Thus $\Nn^5 = \pi^{-1}(S^3)$.  It is now possible to identify $\pi^{-1}(S^3)$ with the unit tangent bundle $T^1S^3$ to $S^3$.  Then the intersection $\Ss \cap \Nn^5$ of a complex analytic surface $\Ss$ in $\CP^3$ with $\Nn^5$ determines a unit vector field on a domain of $S^3$ which is tangent to a conformal foliation (see Section \ref{sec:2} for the definition).  The basic object which defines an SFR is a complex analytic surface in $\CP^3$, as can be seen as follows.

Let $\Ss \subset \CP^3$ be a regular complex analytic surface locally parametrised in the form $(z,w) \mapsto [\xi_A(z,w), \eta^{A'}(z,w)]$, where $\xi_A$ and $\eta^{A'}$ are holomorphic in $(z,w)$.  Then, provided $\eta^{1'} \not\equiv 0$, up to a relabelling of the canonical coordinates of $\CP^3$, we can find a reparametrisation of the form $(z,w) \mapsto [\si (z,w), \tau (z,w), \beta (z), 1]$, where now $\si$ and $\tau$ are meromorphic in $(z,w)$ and $\beta$ is a meromorphic function of $z$ only.  Now the incidence relation {(\ref{incidence}) takes the form
$$
\left\{ \begin{array}{lll}
-\ii\sqrt{2} \si & = & \be u + q \\
- \ii \sqrt{2} \tau & = & \be \ov{q} + v \end{array}
\right.
$$
where we write $u = t+x^1, v = t-x^1, q = x^2 +\ii x^3$.  On eliminating the variable $w$ between these two equations, we obtain a relation of the form:
\begin{equation} \label{psi}
\psi (z, \be (z) u + q, \be (z)\ov{q} + v) = 0\,,
\end{equation}
where $\psi$ is a holomorphic function of three complex variables. Then it is quickly checked that any solution $z = z(x)$ of (\ref{psi}) determines a solution $\mu = - 1/z$ of (\ref{SFR-2}) and so defines an SFR.  Conversely, any SFR is locally given this way; this is know as the Kerr Theorem, see for example \cite{Hu-To,Pe-Ri-2}.  Solutions of the zero rest-mass field equations are now given by contour integrals of functions $\psi$ given as above. 

Explicitly, the field equations for a zero rest-mass particle of helicity $- n/2$ take the form
$$
\na^{AA'} \phi_{AB\ldots L} = 0\,.
$$
where the spinor field $\phi_{AB\ldots L}$ is symmetric in its $n$ indices.  Special solutions are given by contour integrals of the form \cite{Pe-2}:
$$
\phi_r = \frac{1}{2\pi \ii} \oint z^r f(z , u + z \ov{q}, q + zv) \dd z
$$
where $f$ is a holomorphic function of three complex variables.  Then we have the recurrence relations:
$$
\left\{ \begin{array}{ccl}  
\frac{\pa \phi_r}{\pa \ov{\zeta}} & = & \frac{\pa \phi_{r+1}}{\pa u} \\
\frac{\pa \phi_r}{\pa v} & = & \frac{\pa \phi_{r+1}}{\pa \zeta}   \qquad r = 0,1,2, \ldots
\end{array} \right.
$$   
For a non-negative integer $n$, let $\phi_{AB\cdots K}$ (with $2n$ spinor indices) be defined by
$$
\phi_0 = \phi_{00\cdots 00}, \quad  \phi_1 = \phi_{00\cdots 01}, \quad \phi_2 = \phi_{00\cdots 11}, \quad  \ldots , \quad \phi_{AB\cdots K} = \phi_{(AB\cdots K)}
$$
Then the recurrence relations above are equivalent to the spinor field equation
$$
\na^{AA^{\prime}}\phi_{AB\cdots K} = 0\,,
$$
These solutions to the zero rest-mass field equations are called algebraically special.  In their most general form, solutions are obtained via the \emph{Penrose transform}; an integral transform defined on $1$-cohomology with coefficients in the sheaf of twistor functions homogeneous of degree $-n-2$.  In general the contour in the above integral will surround a pole of $f$ and we obtain an SFR via the equation $\psi := 1/f = 0$.  

On writing $t$ for the time coordinate $x^0$ and setting $\pa_t = \pa / \pa t, \pa_1 = \pa / \pa x^1, \pa_q = \frac{1}{2}(\pa_2 - \ii \pa_3), \pa_{\ov{q}} = \frac{1}{2}(\pa_2 + \ii \pa_3)$, the equations (\ref{SFR-2}) for a SFR have an equivalent expression:
\begin{equation} \label{SFR-3}
\left\{ \begin{array}{lll}
{\rm (i)} \quad \mu \pa_1\mu - \mu^2 \pa_q\mu + \pa_{\ov{q}}\mu & = & 0 \\
{\rm (ii)} \quad \mu \pa_t\mu + \mu^2\pa_q\mu + \pa_{\ov{q}}\mu & = & 0
\end{array} \right.
\end{equation}
These are of course equations expressed in terms of local coordinates and there is some arbitrariness in their choice.  We first of all show how these two equations have an invariant expression, the first in terms of conformal foliations on space-like slices $\RR^3_t:= \{ (t, x^1, x^2, x^3)\in \MM^4 : t = {\rm const}\}$ and the second as a time evolution of such a foliation.  Foliations of this type integrate to give locally defined semi-conformal maps, which will be the basis of our adaptation of the above constructions to more general space-times. 

\section{Reformulation of the SFR equations}\label{sec:2}

A Lipschitz map $\phi : (M^m, g) \ra (N^n, h)$ between Riemannian manifolds is said to be \emph{semi-conformal} if, at each point $x\in M$ where $\phi$ is differentiable (dense by Radmacher's Theorem), the derivative $\dd \phi_x : T_xM\ra T_{\phi (x)}N$ is either the zero map or is conformal  and surjective on the complement of $\ker \dd\phi_x$ (called the \emph{horizontal} or \emph{complementary distribution}).  Thus, there exists a number $\la (x)$ (defined almost everywhere), called the \emph{dilation}, such that $\la (x)^2 g(X,Y)$ $ = $ $\phi^*h(X,Y)$, for all $X,Y\in (\ker \dd\phi_x)^{\perp}$.  If $\phi$ is of class $C^1$, then we have a useful characterisation in local coordinates, given by
$$
g^{ij}\phi_i^{\al}\phi_j^{\be} = \la^2 h^{\al\be}\,,
$$ 
 where $(x^i), (y^{\al})$ are coordinates on $M, N$, respectively and $\phi^{\al}_i = \pa (y^{\al}\circ \phi ) / \pa x^i$.  The  fibres of a smooth submersive semi-conformal map determine a conformal foliation, see \cite{Va} and conversely, with respect to a local foliated chart, we may put a conformal structure on the leaf space with respect to which the projection is a semi-conformal map.  We then have the identity:
$$
\left(\Ll_Ug\right)(X,Y) = - 2 U(\ln \la )\,g(X,Y),
$$
for $U$ tangent and $X,Y$ orthogonal to the foliation.  This latter equation can be taken to be the characterisation of a conformal foliation.  

The fundamental equation of a semi-conformal submersion $\phi : M^m \ra N^n$ relates the tension field $\tau (\phi ):= \Tr \na \dd \phi$ with the gradient of the dilation and the mean curvature $\nu$ of the fibres \cite{Ba-Wo-6}:
\begin{equation} \label{fund-scm}
\tau (\phi )  = - (n-2) \dd \phi (\grad \ln \la ) - (m-n) \dd \phi (\nu )\,.
\end{equation}
In particular, if $n = 2$, then a semi-conformal map is harmonic ($\tau (\phi ) = 0$) if and only if the fibres are minimal.  Note also that any conformal foliation only determines a corresponding (locally defined) semi-conformal map up to post-composition with a comformal mapping.  That is, if $\phi : M^m \ra N^n$ is semi-conformal and $\psi : N^n \ra P^n$ is a submersive conformal mapping, then the composition $\psi \circ \phi : M^m \ra P^n$ is also semi-conformal with the same regular fibre components as $\phi$.

In order to understand geometrically equations (\ref{SFR-3}), we first prove the following lemma.  Given a point $x$ of $\RR^3_t$, for some $t$ and a unit vector $U(x) \in T_x\RR^3_t$, we define \emph{the (future pointing) null geodesic determined by $U(x)$} to be the null ray passing through $x$ generated by the null vector $\pa_t + U(x)$.

\begin{lemma} \label{lem:Uparallel}
Let $U_0$ be a unit vector field on a domain of $\RR^3_0$, which defines a vector field $U_t$ on a domain of $\RR^3_t$ by parallel transport along null geodesics determined by $U_0$.  Then the evolution of $U$ satisfies
$$
\frac{\pa U}{\pa t} = - \na^{\RR^3_t}_UU\,.
$$
Conversely, any solution to this equation with given initial data $U\vert_{t=0} = U_0$ on $\RR^3_0$ arises this way.
\end{lemma}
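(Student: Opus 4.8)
The plan is to exploit the flatness of $\MM^4$ throughout. Since the Levi-Civita connection of the Minkowski metric is the flat connection, the null geodesic determined by $U_0(x)$ is simply the straight line $\la \mapsto (0,x) + \la(\pa_t + U_0(x))$, which meets the slice $\RR^3_t$ at the point $\Phi_t(x) := x + t\,U_0(x)$ (here $\pa_t + U_0(x)$ is null because $\pa_t$ is a unit timelike vector orthogonal to the unit spacelike vector $U_0(x)$). For $t$ small and on a relatively compact domain, $\Phi_t$ is a diffeomorphism onto its image, so it determines the point-to-point correspondence between slices. Two further flat-space facts are needed: parallel transport along any curve preserves the (constant) components of a vector, so the transported field is constant along each ray, $U(t, \Phi_t(x)) = U_0(x)$; and the metric induced on each spacelike slice $\RR^3_t$ is the standard Euclidean metric, whence $\na^{\RR^3_t}_X Y$ is just the componentwise directional derivative $(\dd Y)(X)$.

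First I would establish the forward implication by differentiating the constancy relation $U(t, \Phi_t(x)) = U_0(x)$ with respect to $t$. Since $\frac{\dd}{\dd t}\Phi_t(x) = U_0(x) = U_t(\Phi_t(x))$, the chain rule gives
$$
0 = \frac{\pa U}{\pa t}\big(t, \Phi_t(x)\big) + (\dd U_t)\big(U_t(\Phi_t(x))\big)\,.
$$
Identifying $(\dd U_t)(U_t)$ with $\na^{\RR^3_t}_{U}U$ by the Euclidean remark above yields $\frac{\pa U}{\pa t} = -\na^{\RR^3_t}_U U$ at each point $\Phi_t(x)$, and as $\Phi_t$ is onto a domain of $\RR^3_t$ this holds everywhere.

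For the converse I would argue by the method of characteristics applied to the first-order system $\pa_t U = -\na^{\RR^3_t}_U U$, i.e. $\pa_t U^i + U^j\pa_j U^i = 0$ in Euclidean coordinates. Given a solution with $U\vert_{t=0} = U_0$, consider the characteristic curves solving $\dot y(t) = U(t, y(t))$, $y(0) = x$. Along such a curve $\frac{\dd}{\dd t}U(t, y(t)) = \pa_t U + (\dd U)(\dot y) = \pa_t U + \na^{\RR^3_t}_U U = 0$, so $U$ is constant along characteristics; consequently $\dot y(t) = U_0(x)$ is constant and $y(t) = x + t\,U_0(x) = \Phi_t(x)$. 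Thus the solution satisfies $U(t, \Phi_t(x)) = U_0(x)$, which is precisely the field obtained by parallel transport, and local uniqueness for the characteristic ODE identifies the two.

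The hard part will not be any single calculation --- each step is short once flatness is invoked --- but rather the bookkeeping around well-definedness: the correspondence $\Phi_t$ is only a diffeomorphism before characteristics cross, so both statements are genuinely local in $t$ (shocks may form for large $t$), and one must check that the parallel-transported vector, a priori a vector in $\MM^4$, indeed remains purely spatial and of unit length so that it defines a unit vector field tangent to $\RR^3_t$. Both follow immediately from the constancy of components along the ray, but they should be recorded to justify that $U_t$ is a legitimate object on the slice.
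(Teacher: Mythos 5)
Your proof is correct and follows essentially the same route as the paper: the forward direction differentiates the constancy relation $U_t(x + tU_0(x)) = U_0(x)$ along the rays (the paper does this via a Taylor expansion at $t=0$, you via the chain rule at general $t$), and your converse by the method of characteristics is an explicit rendering of the uniqueness-for-first-order-PDE argument that the paper invokes in a single line. No gaps; your bookkeeping remarks (spatiality and unit length of the transported vector, locality in $t$ before characteristics cross) only add rigor to what the paper leaves implicit.
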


\begin{proof}  By construction $U_t(y) = U_0(x)$ with $y = x + tU_0(x)$.  At a point $x \in \RR^3_0$, we have
$$
\frac{\pa U}{\pa t}\Big\vert_{t=0} = \lim_{t \ra 0} \frac{U_t(x) - U_0(x)}{t}\,.
$$
Now 
\begin{eqnarray*}
U_0(x) & = & U_t(x + t U_0(x)) \\
 & = & U_t(x) + t \dd U_t(x)(U_0(x)) + \Oo (t^2)\,,
\end{eqnarray*}
so 
\begin{eqnarray*}
\frac{\pa U}{\pa t} & = & \lim_{t \ra 0} \frac{U_0(x) - t \dd U_t(x)(U_0(x)) - U_0(x) + \Oo (t^2)}{t} \\
 & = & - \dd U_0(x)(U_0(x)) = - \na_{U_0(x)}U_0\,.
\end{eqnarray*}
The converse follows from the uniqueness of a solution to a first order partial differential equation, given initial data on a hypersurface.
\end{proof} 

Recall that a complex-valued function on a domain Minkowski space defines an SFR if and only if it satisfies equation \eqref{SFR-2}.

\begin{proposition} \label{prop:char-SFR}
Let $\mu$ define a shear-free ray congruence on a domain $W\subset \MM^4$ and let 
$$
U = \frac{1}{|\mu |^2 + 1} \left( |\mu |^2 - 1, 2\mu \right) \in S^2
$$
be the associated unit direction field.  Let $g = (\dd x^1)^2 + ( \dd x^2)^2 + ( \dd x^3)^2$ be the restriction of the canonical metric on $\MM^4$ to each spacelike slice $\RR^3_t : t =$ const.  Then on $W$:
\begin{equation} \label{char-SFR}
\left\{ \begin{array}{lrll}
{\rm (i)} \quad & \left( \Ll_Ug\right) (X,Y) & = & a g(X,Y) \\
{\rm (ii)} \quad & \frac{\pa U}{\pa t} & = & - \na^{\RR^3}_UU 
\end{array}
\right.
\end{equation}
for some function $a : W \ra \RR$, where $X,Y\in T\RR^3_t$ are vectors orthogonal to $U$ and where $\na^{\RR^3}$ denotes the Levi-Civita connection on $(\RR^3_t, g)$.  In particular, $U$ is tangent to a conformal foliation on $W\cap \RR^3_t$, for each $t$.

Conversely, any solution $U$ of (\ref{char-SFR})(i) and (ii) on a domain $W\subset \MM^4$ determines an SFR on $W$.  Furthermore, if (ii) holds on $\MM^4$ and (i) holds on an initial hypersurface $\RR^3_{t_0}$, for some $t_0$, then (i) holds for all $t$, where it is understood that the domain of $U$ may need to be restricted to avoid singular points of the associated foliation.  Thus conformality of the foliation associated to the unit vector field $U$ is preserved under the flow (\ref{char-SFR})(ii).
\end{proposition}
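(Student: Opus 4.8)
The plan is to pass to the local coordinate form of the equations and set up a dictionary between the two conditions of (\ref{char-SFR}) and the two scalar equations (\ref{SFR-3}). Write $F$ and $G$ for the left-hand sides of (\ref{SFR-3})(i) and (\ref{SFR-3})(ii) respectively, so that, by Section \ref{sec:1}, an SFR is exactly $F=G=0$. First I would check that condition (\ref{char-SFR})(ii) is equivalent to $G=0$: by Lemma \ref{lem:Uparallel}, (\ref{char-SFR})(ii) says precisely that $U_t$ is obtained by parallel transport along the null geodesics it determines, i.e. that the congruence $\pa_t+U$ is geodesic, and substituting the stereographic expression for $U$ into $\pa U/\pa t+\na^{\RR^3}_UU$ reproduces $G$. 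Likewise, (\ref{char-SFR})(i) is the conformal-foliation condition recalled in Section \ref{sec:2}, and evaluating $\Ll_Ug$ on the screen $U^{\perp}\subset T\RR^3_t$ with $U$ given by the stereographic formula turns it into $F=0$ (the counting matches: the trace-free part of $\Ll_Ug|_{U^\perp}$ is one complex quantity). With this dictionary the forward implication and its converse are immediate, since an SFR is equivalent to $F=G=0$.

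For the final and most substantial assertion I am given $G\equiv0$ on $\MM^4$ together with $F=0$ on $\RR^3_{t_0}$, and I must propagate $F=0$ to all $t$. The idea is to derive a closed linear evolution equation for $F$. To expose the structure I would pass to the null coordinates $u=t+x^1$, $v=t-x^1$, with $\pa_u=\tfrac12(\pa_t+\pa_1)$ and $\pa_v=\tfrac12(\pa_t-\pa_1)$; one then has the exact identities $F+G=2\big(\mu\pa_u\mu+\pa_{\ov q}\mu\big)$ and $G-F=2\mu\big(\pa_v\mu+\mu\pa_q\mu\big)$, which render the two SFR quantities symmetric. The key manipulation is to apply the operator $\pa_v+\mu\pa_q$ to $\mu\pa_u\mu+\pa_{\ov q}\mu$ and the operator $\mu\pa_u+\pa_{\ov q}$ to $\pa_v\mu+\mu\pa_q\mu$, and then subtract: every second-order derivative of $\mu$ cancels by equality of mixed partials, and the surviving quadratic first-order terms reorganize into multiples of $F$ and $G$. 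This cancellation is the computational heart of the argument.

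Imposing $G\equiv0$ (so the two combinations above read $\tfrac12F$ and $-\tfrac12F$), the surviving identity is a first-order linear homogeneous equation for $F$ alone,
$$
\pa_t F+\mu\,\pa_q F+\frac1\mu\,\pa_{\ov q}F=\Big(\frac1{\mu^2}\pa_{\ov q}\mu-\pa_q\mu\Big)F.
$$
Since the coefficient of $\pa_t F$ equals $1$, the slices $\{t=\mathrm{const}\}$ are non-characteristic, and the same uniqueness principle for first-order PDEs invoked in Lemma \ref{lem:Uparallel} forces $F\equiv0$ once $F|_{t_0}=0$; equivalently, $F$ is transported along the characteristics of the displayed operator and so stays in its zero set. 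This is exactly the preservation of conformality of the foliation under the flow (\ref{char-SFR})(ii).

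The step I expect to be the main obstacle is establishing this transport equation, specifically arranging the cross-differentiation so that \emph{all} second-order terms in $\mu$ cancel, leaving a genuine first-order linear operator with nonvanishing $\pa_t$-coefficient. A secondary point needing care is the division by $\mu$: the equation degenerates where $\mu\in\{0,\infty\}$, which corresponds precisely to the singular points of the associated foliation, so the conclusion holds on the complement of these loci, as the statement already permits.
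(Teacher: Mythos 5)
Your dictionary is only half right, and the wrong half is exactly where your propagation argument needs it. Write $F$ and $G$ for the left-hand sides of (\ref{SFR-3})(i),(ii). The equivalence (\ref{char-SFR})(i) $\Leftrightarrow F=0$ is correct (it is the paper's computation $(\Ll_Vg)(X-\ii Y,X-\ii Y)=-2(1+|\mu|^2)F$). But (\ref{char-SFR})(ii) is \emph{not} equivalent to $G=0$. Condition (ii) says the null congruence is geodesic, i.e. $(\pa_t+U)\mu=0$, and substituting $U=\frac{1}{1+|\mu|^2}\left(|\mu|^2-1,2\mu\right)$ gives the identity
$$
\mu\,(1+|\mu|^2)\,(\pa_t+U)\mu \;=\; (1+|\mu|^2)\,G+(|\mu|^2-1)\,F\,,
$$
so (ii) is the vanishing of this \emph{combination}, not of $G$ alone. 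This is why the paper states its equivalences jointly: it derives (\ref{char-SFR})(ii) from (\ref{char-SFR})(i) \emph{together with} (\ref{SFR-3})(ii), and conversely (\ref{SFR-3})(ii) from (\ref{SFR-3})(i) together with (\ref{char-SFR})(ii). Your forward and converse implications survive, because once $F=0$ is in hand the combination reduces to $(1+|\mu|^2)G$. Your final step does not survive: the hypothesis ``(ii) on $\MM^4$'' yields only $G=-\frac{|\mu|^2-1}{|\mu|^2+1}F$, not $G\equiv0$; assuming $G\equiv 0$ on $\MM^4$ is (away from $|\mu|=1$) tantamount to assuming the conclusion $F\equiv 0$ you are trying to propagate, so the argument is circular. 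There is also a second defect even granting $G\equiv0$: your transport operator $\pa_t+\mu\pa_q+\mu^{-1}\pa_{\ov q}$ has complex coefficients (it is a real vector field only where $|\mu|=1$), so the characteristic/uniqueness principle for first-order PDEs invoked in Lemma \ref{lem:Uparallel} does not apply to it; Cauchy uniqueness for complex first-order operators is genuinely delicate.

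Both defects are repaired by making the correct substitution in your own (correct) identity. Indeed, with $P=\mu\pa_u\mu+\pa_{\ov q}\mu=\tfrac12(F+G)$ and $Q=\pa_v\mu+\mu\pa_q\mu=\tfrac1{2\mu}(G-F)$, your cross-differentiation gives identically
$$
(\pa_v+\mu\pa_q)P-(\mu\pa_u+\pa_{\ov q})Q \;=\; \pa_u\mu\; Q-\pa_q\mu\; P\,.
$$
Now substitute the actual content of (ii), namely $P=\frac{F}{1+|\mu|^2}$ and $Q=-\frac{\ov\mu F}{1+|\mu|^2}$: the principal part reassembles into
$$
\pa_v+|\mu|^2\pa_u+\mu\pa_q+\ov\mu\pa_{\ov q}\;=\;\tfrac{1}{2}(1+|\mu|^2)\,(\pa_t+U)\,,
$$
a \emph{real} vector field transverse to the slices, and one obtains $(\pa_t+U)F=\al F$ for some function $\al$. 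Then $F$ satisfies a linear ODE along each null geodesic, hence vanishes for all $t$ once it vanishes on $\RR^3_{t_0}$. This corrected version is in substance exactly the paper's argument, which invokes Proposition \ref{prop:prop-conf} (specialized to the flat case $f\equiv1$, $T\equiv0$) to get $\dd\si(\pa_t+U)=\al\si$ for the shear $\si=(\Ll_Ug)(X-\ii Y,X-\ii Y)$. Your remark about the singular locus $\mu\in\{0,\infty\}$ is fine and unaffected by these corrections.
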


\begin{proof}  First note that equation (\ref{char-SFR})(i) is invariant under rescaling $U \mapsto f U$ for any real valued function $f$, so we work rather with the vector field
$$
V = \left( |\mu |^2 - 1, 2\mu \right) = (|\mu |^2 - 1) \pa_1 + 2 \mu \pa_q + \pa_{\ov{q}}
$$
where we recall that we write $x^1 = x^1, q = x^2 + \ii x^3$\,.  Now observe that (\ref{char-SFR})(i) is equivalent to the equation
$$
\left(\Ll_Vg\right)(X- \ii Y, X - \ii Y) = 0
$$
where $X,Y\in T\RR^3_t$ satisfy $||X|| = ||Y||$,  $g(X,Y)=g(X,V)=g(Y,V) = 0$, and we extend tensors to complex vectors by complex linearity.  In what follows we use the bracket $\inn{\, \cdot \,, \, \cdot \,}$ where convenient to express the inner product $g(\,\cdot \,, \, \cdot \,)$.

A complex vector with the above properties is given by
$$
X - \ii Y = \mu \pa_1 - \mu^2 \pa_q + \pa_{\ov{q}}\,.
$$
Then, on noting that $\inn{\pa_q, \pa_q} = \inn{\pa_{\ov{q}}, \pa_{\ov{q}}} = 0$ and that $\inn{\pa_q, \pa_{\ov{q}}} = \frac{1}{2}$, we have
\begin{eqnarray*}
\frac{1}{2} \left(\Ll_Vg\right)(X- \ii Y, X - \ii Y) & = & g(\na_{X- \ii Y}V, X - \ii Y) = g(V, \na_{X - \ii Y}(X- \ii Y)) \\
 & = & \biginn{(|\mu |^2 - 1)\pa_1 + 2 \mu \pa_q + 2 \ov{\mu}\pa_{\ov{q}}, \na_{(\mu \pa_1 - \mu^2 \pa_q + \pa_{\ov{q}})} (\mu \pa_1 - \mu^2 \pa_q + \pa_{\ov{q}})} \\
 & = & - (1 + |\mu |^2) \left( \mu \pa_1\mu - \mu^2 \pa_q\mu + \pa_{\ov{q}} \mu \right) \,.
\end{eqnarray*}
The equivalence between (\ref{SFR-3})(i) and (\ref{char-SFR})(i) now follows.

We perform a similar calculation in order to determine the time evolution.  Suppose that (\ref{char-SFR})(i) and (\ref{SFR-3})(ii) hold.  Then
\begin{eqnarray*}
\na^{\RR^3}_VV & = & \left\{ (|\mu |^2 - 1)\pa_1\mu + 2 \mu \pa_q \mu + 2 \ov{\mu} \pa_{\ov{q}} \mu \right\} (\ov{\mu} \pa_1 + 2 \pa_q ) \\
 &  & \qquad + \left\{ (|\mu |^2 - 1)\pa_1\ov{\mu} + 2 \mu \pa_q \ov{\mu} + 2 \ov{\mu} \pa_{\ov{q}} \ov{\mu} \right\} (\mu \pa_1 + 2 \pa_{\ov{q}} ) \\
 & = & - (|\mu |^2 + 1)\left( \pa_t\mu (\ov{\mu}\pa_1 + 2 \pa_q) + \pa_t \ov{\mu}(\mu \pa_1 + 2 \pa_{\ov{q}})\right) \,.
\end{eqnarray*}
But $\inn{\mu \pa_1 + 2 \pa_{\ov{q}}, X - \ii Y} = 0$ and $\inn{\ov{\mu}\pa_1 + 2 \pa_q, X - \ii Y} = |\mu |^2 + 1$, so that
$$
\pa_t \mu = - \frac{1}{(|\mu |^2 + 1)^2}\inn{\na_VV, X - \ii Y}\,.
$$
On the other hand
$$
\frac{\pa V}{\pa t} = (\ov{\mu}\pa_1 + 2 \pa_q)\pa_t \mu + ( \mu \pa_1 + 2 \pa_{\ov{q}})\pa_t\ov{\mu}\,,
$$
and we obtain
$$
\frac{\pa V}{\pa t} = - \frac{\inn{\na_VV, X - \ii Y}}{(|\mu |^2 + 1)^2} \left( 2(X + \ii Y) + \ov{\mu}V\right) - \frac{\inn{\na_VV, X + \ii Y}}{(|\mu |^2 + 1)^2} \left( 2(X - \ii Y) + \mu V\right)\,.
$$
On replacing $V$ with $U = V / ||V||$ and the vectors $X,Y$ with $\hat{X} = X / ||X||, \hat{Y} = Y/||Y||$, respectively, we obtain
\begin{eqnarray*}
\frac{\pa U}{\pa t} & = & - \frac{1}{2} \inn{\na_UU, \hat{X} - \ii \hat{Y}}(\hat{X} + \ii \hat{Y}) - \frac{1}{2} \inn{\na_UU, \hat{X} + \ii \hat{Y}} ( \hat{X} - \ii \hat{Y}) \\
 & = & - \na^{\RR^3}_UU\,.
\end{eqnarray*}
Conversely, if (\ref{SFR-3})(i) and (\ref{char-SFR})(ii) hold, then on reversing the above arguments, we obtain (\ref{SFR-3})(ii). 

In order to prove the last part of the Proposition, we calculate the time derivative of (\ref{char-SFR})(i), given that (\ref{char-SFR})(ii) holds.  One of our aims is to see how the pair of equations (\ref{char-SFR}) adapt to more general space times; this calculation is therefore carried out in Proposition \ref{prop:prop-conf} below, which establishes that in the flat case, on writing $\si = (\Ll_Ug)(X - \ii Y, X - \ii Y)$, where now $X, Y$ are supposed normalised to have unit length,
$$
\dd \si (\pa_t + U) = \al \si
$$
for some function $\al$.  In particular, by uniqueness of the solution of a first order PDE, if $\si$ vanishes on a hypersurface, it must vanish on the future causal space of this hypersurface.  This completes the proof of the proposition.
\end{proof}

We now wish to see how, given a family of evolving unit vector fields $U_t$ tangent to a conformal foliation, a correspond family $\{ \phi_t\}$ of semi-conformal maps evolve.  Since such a family of maps is only defined up to post-composition with a conformal transformation, we expect some gauge freedom in the choice of $\phi_t$.

\begin{corollary} \label{cor:evol-scm}  Let $U_t$ be a family of unit vector fields defined on $\RR^3_t$ evolving under the equation {\rm (\ref{char-SFR})(ii)} and let $\phi_t: \RR^3_t \ra \CC$ be a corresponding family of (locally defined) semi-conformal mappings with fibres the integral curves of $U_t$.  Then 
\begin{equation} \label{evol-scm}
d\left( \frac{\pa \phi }{\pa t} \right) (U) = -\tau (\phi )\,.
\end{equation}
Furthermore, any other semi-conformal solution of equation {\rm (\ref{evol-scm})} with the same fibres as $\phi_t$ has the form $f_t \circ \phi_t$, where $f_t$ is an \emph{arbitrary} conformal mapping of a domain of the complex plane $\CC$.
\end{corollary}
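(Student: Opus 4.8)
The plan is to derive the evolution equation (\ref{evol-scm}) by differentiating in $t$ the verticality condition satisfied by $U$, and then to characterise the gauge freedom by combining the fact, recalled before the statement, that two semi-conformal maps with the same fibres differ by post-composition with a conformal map, together with a direct check that this post-composition never conflicts with (\ref{evol-scm}). First, since the fibres of $\phi_t$ are the integral curves of $U_t$, the field $U$ is vertical, so in local coordinates on $\RR^3_t$ we have $d\phi(U) = U^i\pa_i\phi = 0$ identically in $(t,x)$. Differentiating this identity in $t$ gives $(\pa_t U^i)\pa_i\phi + U^i\pa_i(\pa_t\phi) = 0$. Invoking the flow equation (\ref{char-SFR})(ii), namely $\pa_t U = -\na^{\RR^3}_U U$, the first term is $d\phi(\pa_t U) = -d\phi(\na^{\RR^3}_U U)$ and the second is $d(\pa_t\phi)(U)$, whence
$$
d\!\left(\frac{\pa\phi}{\pa t}\right)(U) = d\phi(\na^{\RR^3}_U U).
$$

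To complete the first identity I identify the right-hand side with $-\tau(\phi)$ using the fundamental equation (\ref{fund-scm}). With $m=3$ and $n=2$ the coefficient $(n-2)$ vanishes and (\ref{fund-scm}) collapses to $\tau(\phi) = -d\phi(\nu)$, where $\nu$ is the mean curvature of the fibres. The fibres are one-dimensional integral curves of the \emph{unit} field $U$; differentiating $g(U,U)=1$ along $U$ gives $g(\na_U U, U)=0$, so $\na^{\RR^3}_U U$ is automatically horizontal and equals the mean curvature vector $\nu$ of the fibre. Hence $\tau(\phi) = -d\phi(\na^{\RR^3}_U U)$, and substituting into the previous display yields (\ref{evol-scm}).

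For the final assertion I argue in two directions. If $\tilde\phi_t$ is any semi-conformal map with the same fibres as $\phi_t$, then by the post-composition principle recalled before the statement, applied for each fixed $t$, there is a conformal map $f_t$ of a domain of $\CC$ with $\tilde\phi_t = f_t\circ\phi_t$; this gives the stated form. Conversely, I verify that \emph{every} such $f_t$ yields a solution, so $f_t$ is genuinely unconstrained. Writing $\tilde\phi = f_t\circ\phi_t$ and differentiating in $t$ gives $\pa_t\tilde\phi = (\pa_t f)(\phi) + f'(\phi)\,\pa_t\phi$. Applying $d(\cdot)(U)$, the contribution of $\pa_t f$ drops out because its differential is proportional to $d\phi(U)=0$ (using that $\pa_t f$ is again holomorphic in the target variable), and the term in $f''(\phi)$ vanishes for the same reason, leaving $d(\pa_t\tilde\phi)(U) = f'(\phi)\,d(\pa_t\phi)(U) = -f'(\phi)\,\tau(\phi)$. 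Comparing with $\tau(\tilde\phi)$ through the composition law $\tau(f\circ\phi) = df(\tau(\phi)) + \Tr \na df(d\phi,d\phi)$ (trace with respect to $g$), the quadratic term equals $\la^2\tau(f)$ by conformality of $\phi$ on the horizontal $2$-plane and vanishes since the holomorphic map $f$ is harmonic, while $df(\tau(\phi)) = f'(\phi)\,\tau(\phi)$. Thus $\tau(\tilde\phi) = f'(\phi)\tau(\phi)$ and $d(\pa_t\tilde\phi)(U) = -\tau(\tilde\phi)$, confirming both that $f_t\circ\phi_t$ solves (\ref{evol-scm}) and that $f_t$ is arbitrary.

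The one genuinely delicate point is the sign- and convention-tracking in the identification $\nu = \na^{\RR^3}_U U$ and the reduction of (\ref{fund-scm}); everything else is routine once one uses that holomorphic self-maps of $\CC$ are harmonic and that the quadratic tension term is annihilated by the conformality of $\phi$ on the horizontal plane. I would therefore devote the most care to checking that the mean-curvature and dilation conventions match those fixed by (\ref{fund-scm}), since an error there would flip the sign in (\ref{evol-scm}).
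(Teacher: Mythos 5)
Your proposal is correct and follows essentially the same route as the paper: differentiate $\dd\phi_t(U_t)=0$ and invoke the fundamental equation (\ref{fund-scm}) with $m=3$, $n=2$ (identifying $\nu = \na_UU$) for the first identity, then decompose any other semi-conformal solution as $f_t\circ\phi_t$ and use the composition law for the tension field together with the harmonicity of conformal maps of plane domains to conclude that $f_t$ is unconstrained. The only difference is cosmetic: the paper carries out the chain-rule computation invariantly via pull-back connections and the second fundamental form of a composition, whereas you exploit the flatness of $\CC$ and holomorphicity of $f_t$ to do the same computation in coordinates.
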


\begin{remark} The key point of the last part of the corollary, is that there is \emph{no} restriction on the conformal mapping $f_t$ imposed by \eqref{evol-scm}.
\end{remark}

\begin{proof} Differentiation of the identity $\dd \phi_t (U_t) = 0$ yields the equation
$$
\dd\left( \frac{\pa \phi }{\pa t}\right)(U) + \dd \phi \left( \frac{\pa U}{\pa t}\right) = 0\,.
$$
By the fundamental equation (\ref{fund-scm}) of a semi-conformal map, we have
\begin{eqnarray*}
\dd \phi \left( \frac{\pa U}{\pa t}\right) & = & - \dd \phi \left( \na_UU \right) \\
  & = & \tau (\phi )\,,
\end{eqnarray*}
as required. 

Let $\psi_t$ be a family of semi-conformal mappings having the same fibres as $\phi_t$ so that locally, we can set $\psi_t = f_t\circ \phi_t$ for some family of conformal functions $f_t : A \ra \CC$, where $A$ is an open subset of $\CC$.  We claim that 
\begin{equation} \label{evol-comp}
\dd\left( \frac{\pa \psi }{\pa t}\right)(U) + \tau (\psi ) = \dd f \left(\dd\left( \frac{\pa \phi }{\pa t}\right)(U) + \tau (\phi )\right) + \la^2 \tau (f) \circ \phi\,,
\end{equation} 
where $\la$ denotes the dilation of $\phi$.  In particular, if $\phi$ satisfies (\ref{evol-scm}), then $\psi$ satisfies (\ref{evol-scm}) if and only if $\tau (f) = 0$; but this is so for $f_t$ arbitrary conformal.

In order to show \eqref{evol-comp}, we calculate $\dd\left( \frac{\pa \psi}{\pa t}\right) (U)$.  First,
$$
\frac{\pa\psi}{\pa t} = \dd f \left( \frac{\pa \phi}{\pa t}\right) + \frac{\pa f}{\pa t}\circ \phi\,.
$$
Define $\Phi : I \times \RR^3_t \ra \CC$ ($I\subset \RR$) by $\Phi (t, x) = \phi_t(x)$ and write $\na^{\psi}$ for the connection in the pull-back bundle $\psi^{-1}T\CC$.  Then $\dd \left( \frac{\pa\psi}{\pa t}\right)(U) = \na^{\psi}_U\frac{\pa \psi}{\pa t}$ and we obtain
\begin{eqnarray*}
\na^{\psi}_U\left( \dd f \left( \frac{\pa \phi}{\pa t}\right)\right) & = & \na^{\psi}_U\left( \dd f \circ \dd \Phi (\pa / \pa t)\right) \\
 & = & \na^{\psi}_U\left(\dd (f\circ \Phi ) (\pa / \pa t)\right) \\
 & = & \na \dd (f \circ \Phi ) (U, \pa / \pa t) + \dd (f \circ \Phi ) (\na_U(\pa /\pa t))\,.
\end{eqnarray*}
The formula for the second fundamental form of the composition: $\na \dd (f\circ \Phi ) = \dd f (\na \dd \Phi ) + \na \dd f(\dd \Phi , \dd \Phi )$ now gives
\begin{eqnarray*}
\na^{\psi}_U \left( \dd f \left( \frac{\pa \phi}{\pa t}\right) \right) & = & \dd f\left( \na \dd \Phi (U, \pa / \pa t)\right) + \na \dd f (\dd \Phi (U), \dd \Phi (\pa / \pa t)) + \dd (f \circ \Phi ) (\na_U(\pa / \pa t)) \\
 & = & \dd f\big( \na \dd \Phi (U, \pa / \pa t) + \dd \Phi (\na_U(\pa / \pa t)\big) \quad ({\rm since}\ \dd \Phi (U) = 0)\\
 & = & \dd f \left( \na^{\Phi}_U\dd \Phi (\pa / \pa t)\right) = \dd f \left( \na^{\phi}_U \frac{\pa \phi}{\pa t}\right)\,.
\end{eqnarray*}
On the other hand, $\dd \left( \frac{\pa f}{\pa t} \circ \phi\right)(U) = 0$, so that
$$
\dd \left( \frac{\pa \psi}{\pa t}\right)(U) = \dd f \left( \dd \left(\frac{\pa \phi}{\pa t}\right)(U) \right)\,.
$$
Finally, the law for the tension field of a composition gives:
$$
\tau (\psi ) = \dd f (\tau (\phi )) + \la^2 \tau (f)\,,
$$
and \eqref{evol-comp} follows.  
\end{proof}

We can normalise the evolution $\pa \phi_t / \pa t$, at least locally, in the following way.  Let $A\subset \RR^3_0$ be an open subset on which $\phi_0$ is submersive with connected fibres.  Let $\Si \subset A$ be a slice, that is, a $2$-dimensional hypersurface which each fibre component intersects transversally in a unique point.  For example, after a suitable change of coordinates, we could take $\Si$ to be the slice: $x^1 = 0$, where we would then require $\pa \phi_0/\pa x^1 (x) \neq 0$ for all $x \in \Si \cap A$.  Points of $\Si$ would then be parametrized by the coordinates $(x^2,x^3)$.  

Let $c(s)$ be an integral curve of $U$ starting at a point of $\Si$: $c(0) \in \Si,\ c'(s) = U(c(s))$.  Then for $s_0$ sufficiently small that $c(s) \in A$ for all $0\leq s \leq s_0$, we have 
\begin{equation} \label{dphibydt}
\frac{\pa \phi }{\pa t} (c(s_0)) = - \int_{s = 0}^{s_0} \tau_{\phi}(c(s))\, \dd s\,.
\end{equation}
Note that $\tau_{\phi}(c(s)) \in T_{\phi (c(s))}\CC = T_{y_0}\CC$, where $y_0 = \phi (c(0)) = \phi (c(s))$ (since $c(s)$ parametrizes the fibre of $\phi$ through $c(0)$), hence the integral on the RHS of (\ref{dphibydt}) is well-defined and gives a vector in $T_{y_0}\CC \cong \CC$.  Of course, different choices of $\Si$ will give different values for $\pa \phi_t / \pa t$, but $\dd \left(\frac{\pa \phi_t}{\pa t}\right) (U)$ is well-defined and independent of this choice. 

\begin{example}  Let $\phi_0: \RR^3 \ra \CC$ be the semi-conformal map $\phi_0(x^1, x^2, x^3) = \sqrt{(x^1)^2 + (x^2)^2} + \ii x^3$.  Then the fibres of $\phi_0$ are circles whose centres lie on the $x^3$-axis lying in planes parallel to the $x^1x^2$-plane.  On letting $U_0$ be the unit vector field tangent to the fibres of $\phi_0$, we have
$$
U_0 = \left( \frac{x^2}{\sqrt{(x^1)^2 + (x^2)^2} }, \frac{-x^1}{\sqrt{(x^1)^2 + (x^2)^2} }, 0\right)\,.
$$
Take $\Si$ to be the slice $x^1 = 0$; then an integral curve of $U_0$ through $(0, x^2, x^3)$ is given by 
$$
c(s) = \left( x^2 \sin \left(\frac{s}{x^2}\right) , x^2 \cos \left(\frac{s}{x^2}\right) , x^3\right)\,.
$$
Now $\tau_{\phi_0} = 1 / \sqrt{(x^1)^2 + (x^2)^2}$, so that
$$
\frac{\pa \phi_0}{\pa t} = - \int_{s = 0}^{s_0} \tau_{\phi_0}(c(s)) \, \dd s = - \int_0^{s_0} \frac{1}{x^2} \dd s = - \frac{s_0}{x^2} = - \arg (x^1+ \ii x^2)\,.
$$
In fact it is possible to calculate the complete evolution of $\phi_0$ to be
$$
\phi_t = \ii x^3 + r - t \arg \left(\frac{r - \ii t}{x^1 - \ii x^2}\right) \,,
$$
where $r = \sqrt{(x^1)^2 + (x^2)^2 - t^2}$; the map $\phi_t$ is defined and smooth on the domain $(x^1)^2 + (x^2)^2 > t^2$.
\end{example}   

\section{Coupled evolution of a metric and a unit vector field}  Consider a triple $(M^3, g_t, U_t)$ consisting of a $3$-manifold $M^3$ endowed with an evolving family of Riemannian metrics $g_t$, together with an evolving unit vector field $U_t$.  We suppose that the parameter $t$ belongs to some connected interval $I\subset \RR$ and we consider the space-time $I\times M^3$ endowed with a metric $\Gg = - f(t,x)^2 \dd t^2 + g_t$, where $f : I \times M^3 \ra \RR$ is a non-vanishing smooth function.  Then on $I\times M^3$, the vector field $\pa_t + fU$ is null; we will call its integral curves the \emph{null curves generated by $U$}. Our aim is to determine conditions when these are geodesic as well as \emph{either} shear-free \emph{or} generate conformal foliations. 

We first determine some basic quantities deriving from the metric $\Gg$. In what follows, we fix a system of local coordinates $(x^0 = t, x^1, x^2, x^3)$ and use the following range of indices: $i,j, \ldots \in \{ 1,2,3\}; \ a,b, \ldots \in \{ 0,1,2,3\}$.   Recall that the Christoffel symbols are given by
$$
\Ga^c_{ab} = \frac{1}{2}\Gg^{cd}\left( \pa_a\Gg_{bd} + \pa_b\Gg_{ad} - \pa_d\Gg_{ab}\right)
$$
and the Riemannian curvature by
$$
R^d_{abc} = \pa_a \Ga^d_{bc} - \pa_b\Ga^l_{ac} + \Ga^e_{bc}\Ga^d_{ae} - \Ga^e_{ac}\Ga^d_{be}\,,
$$
where we write $R(\pa_a, \pa_b)\pa_c = \na_{\pa_a} \na_{\pa_b}\pa_c - \na_{\pa_b}\na_{\pa_a}\pa_c = R^d_{abc} \pa_d$.  We also use the convention $\Gg (R(\pa_a, \pa_b)\pa_c, \pa_d)= R_{abcd} = g_{de}R^e_{abc}$.  We will also have occasion to consider the Christoffel symbols and curvature of the Riemannian metric $g_t$; when there is the possibility of confusion, we will write ${}^{\Gg}\Ga^k_{ij}$ or ${}^g\Ga^k_{ij}$ to distinguish the two cases; similarly for the curvature and the connection. 

\begin{lemma} \label{lem:christoff-Gg}  With the notational conventions for indices given above, the Christoffel symbols of the metric $\Gg= - f(t,x)^2 \dd t^2 + g_t$ are given as follows:
\begin{eqnarray*}
\Ga^0_{00} = \pa_t \ln f \,, \quad \Ga^0_{0i} = \pa_i\ln f\,, \quad \Ga^i_{00} = fg^{ij}\pa_jf\,, \\
\Ga^k_{0i} = \frac{1}{2}g^{kj}\pa_tg_{ij} \,, \quad \Ga^0_{ij} = \frac{1}{2f^2}\pa_t g_{ij}\,, \quad {}^{\Gg}\Ga^k_{ij} = {}^g\Ga^k_{ij} 
\end{eqnarray*}
\end{lemma}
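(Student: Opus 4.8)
The plan is to compute each Christoffel symbol directly from
$$\Ga^c_{ab} = \frac{1}{2}\Gg^{cd}\left( \pa_a\Gg_{bd} + \pa_b\Gg_{ad} - \pa_d\Gg_{ab}\right),$$
exploiting the block-diagonal form of $\Gg$ in the coordinates $(x^0=t, x^1, x^2, x^3)$. First I would record
$$\Gg_{00} = -f^2, \qquad \Gg_{0i} = \Gg_{i0} = 0, \qquad \Gg_{ij} = g_{ij},$$
and note that the inverse is block-diagonal as well, with
$$\Gg^{00} = -\frac{1}{f^2}, \qquad \Gg^{0i} = \Gg^{i0} = 0, \qquad \Gg^{ij} = g^{ij}.$$
The vanishing of the mixed components $\Gg_{0i}$ and $\Gg^{0i}$ is the feature that makes the computation short: in the contraction over $d$, the summation index is forced to match the \emph{type} (temporal or spatial) of the upper index $c$, so each symbol collapses to a single surviving term.

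I would then evaluate the six symbols in two families, according to the type of the upper index. With upper index $0$, only $d=0$ contributes, giving $\Ga^0_{ab} = \tfrac{1}{2}\Gg^{00}(\pa_a\Gg_{b0} + \pa_b\Gg_{a0} - \pa_0\Gg_{ab})$; substituting $\Gg^{00}=-1/f^2$ and $\pa_c(-f^2) = -2f\,\pa_c f$ produces $\Ga^0_{00} = \pa_t\ln f$ and $\Ga^0_{0i} = \pa_i\ln f$, while for $\Ga^0_{ij}$ the terms $\pa_i\Gg_{j0}, \pa_j\Gg_{i0}$ vanish and only $-\pa_0\Gg_{ij}$ remains, yielding $\tfrac{1}{2f^2}\pa_t g_{ij}$. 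With a spatial upper index $k$, only the spatial values $d=l\in\{1,2,3\}$ contribute and $\Gg^{kl}=g^{kl}$; the same bookkeeping gives $\Ga^i_{00} = fg^{ij}\pa_j f$ and $\Ga^k_{0i} = \tfrac{1}{2}g^{kj}\pa_t g_{ij}$, and for the purely spatial symbol
$$\Ga^k_{ij} = \frac{1}{2}g^{kl}\left(\pa_i g_{jl} + \pa_j g_{il} - \pa_l g_{ij}\right),$$
which is exactly ${}^g\Ga^k_{ij}$, the Christoffel symbol of $g_t$ on the slice, since every derivative occurring is spatial.

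There is no genuine obstacle: the lemma is a direct consequence of the block-diagonal structure, and the written proof is essentially a careful substitution. The only points demanding attention are the Lorentzian sign carried by $\Gg_{00}=-f^2$ --- the cancellation of the two minus signs (one from $\Gg^{00}=-1/f^2$, one from $\pa(-f^2)=-2f\,\pa f$) is precisely what turns the temporal derivatives into $\ln f$ --- and keeping track of which entries in each symmetrised sum die because $\Gg_{0i}=0$. I would present the temporal- and spatial-upper-index families together to keep the verification compact.
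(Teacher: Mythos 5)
Your proposal is correct, and it is precisely the computation the paper intends: the paper states the coordinate formula for $\Ga^c_{ab}$ immediately before the lemma and then omits the verification as routine, which is exactly the direct block-diagonal substitution you carry out. All six symbols check out, including the sign bookkeeping from $\Gg_{00}=-f^2$, $\Gg^{00}=-1/f^2$ and the $t$-dependence of $g_t$ entering through $\pa_0\Gg_{ij}=\pa_t g_{ij}$.
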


\begin{lemma} \label{lem:curv-Gg}  With the notational convention for indices given above and writing $\na \dd f$ for the second fundamental form of the function $f$ considered as a function on $M^3$ (with $t$ fixed), the components of the Riemannian curvature tensor of $\Gg$ are given as follows:
$$
\begin{array}{l}
R_{0i0j} = - \frac{1}{4}g^{kl}\pa_tg_{ik} \pa_tg_{jl} + \frac{1}{2}\pa_t(\pa_tg_{ij}) - f\na \dd f (\pa_i, \pa_j) - \frac{1}{2}\pa_t\ln f \pa_t g_{ij} \\
R^0_{ijk} = \frac{1}{2f^2}\left(  g_{lj}\pa_t\Ga^l_{ik} - g_{li}\pa_t\Ga^l_{jk} - \pa_i(\ln f)\pa_tg_{jk} + \pa_j(\ln f)\pa_tg_{ik}   \right) \\
{}^{\Gg}R^l_{ijk} = {}^gR^l_{ijk} + \frac{1}{4f^2}\big( \pa_t g_{jk}\pa_tg_{il} - \pa_tg_{ik}\pa_tg_{jl}\big)\,.
\end{array}
$$
\end{lemma}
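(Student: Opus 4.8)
My plan is to obtain all three formulae by direct substitution of the Christoffel symbols of Lemma~\ref{lem:christoff-Gg} into the curvature formula $R^d_{abc} = \pa_a\Ga^d_{bc} - \pa_b\Ga^d_{ac} + \Ga^e_{bc}\Ga^d_{ae} - \Ga^e_{ac}\Ga^d_{be}$, treating the three component types separately. The single organising device throughout is to split every sum over the contracted index $e\in\{0,1,2,3\}$ into its temporal part $e=0$ and its spatial part $e\in\{1,2,3\}$, since Lemma~\ref{lem:christoff-Gg} gives structurally different expressions in the two cases, and to use repeatedly the identities $\pa_a g^{ij} = -g^{ip}g^{jq}\pa_a g_{pq}$ and ${}^g\Ga^l_{ij} = \tfrac12 g^{lm}(\pa_i g_{jm} + \pa_j g_{im} - \pa_m g_{ij})$ for the spatial metric.

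For the component $R_{0i0j}$ I would first lower the free index with $\Gg$; since $j$ is spatial and $\Gg_{0j}=0$, only the spatial part survives and $R_{0i0j} = g_{jl}R^l_{0i0}$. Expanding $R^l_{0i0}$ with $(a,b,c)=(0,i,0)$, the term $g_{jl}\pa_t\Ga^l_{i0}$ produces both the second time-derivative $\tfrac12\pa_t(\pa_t g_{ij})$ and, via $\pa_t g^{lm}$, half of the quadratic contribution; the spatial part of $g_{jl}\Ga^e_{i0}\Ga^l_{0e}$ supplies the other half, and the two combine with coefficient $-\tfrac12+\tfrac14 = -\tfrac14$ into $-\tfrac14 g^{kl}\pa_t g_{ik}\pa_t g_{jl}$. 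The term $-g_{jl}\pa_i\Ga^l_{00} = -g_{jl}\pa_i(fg^{lk}\pa_k f)$ contributes $-f\pa_i\pa_j f$ together with a first-derivative-of-$g$ piece; adding the spatial part of $-g_{jl}\Ga^e_{00}\Ga^l_{ie}$ and recognising the Christoffel identity above, these assemble exactly into $-f\na\dd f(\pa_i,\pa_j) = -f(\pa_i\pa_j f - {}^g\Ga^k_{ij}\pa_k f)$; meanwhile the $\pa_i f\,\pa_j f$ pieces coming from $-g_{jl}\pa_i\Ga^l_{00}$ and from the $e=0$ term $g_{jl}\Ga^0_{i0}\Ga^l_{00}$ cancel. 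Finally the $e=0$ term $-g_{jl}\Ga^0_{00}\Ga^l_{i0}$ yields $-\tfrac12\pa_t\ln f\,\pa_t g_{ij}$, completing the first formula.

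The remaining two components are lighter. For $R^0_{ijk}$, expanding with $(a,b,c)=(i,j,k)$ and upper index $0$, the derivative terms $\pa_i\Ga^0_{jk}-\pa_j\Ga^0_{ik}$ together with the spatial parts of the quadratic terms reorganise — again through the Christoffel identity and $\pa_t g^{lm}$ — into $\tfrac{1}{2f^2}(g_{lj}\pa_t\Ga^l_{ik} - g_{li}\pa_t\Ga^l_{jk})$, while the $e=0$ quadratic terms produce the $\pa_i\ln f$ and $\pa_j\ln f$ contributions. For the purely spatial component ${}^{\Gg}R^l_{ijk}$, the equality ${}^{\Gg}\Ga^l_{ij}={}^g\Ga^l_{ij}$ makes the derivative terms and the spatial ($e\in\{1,2,3\}$) parts of the quadratic terms reproduce verbatim the intrinsic curvature ${}^gR^l_{ijk}$; the only extra contribution is the $e=0$ part $\Ga^0_{jk}\Ga^l_{i0}-\Ga^0_{ik}\Ga^l_{j0}$, which, using $\Ga^0_{jk}=\tfrac1{2f^2}\pa_t g_{jk}$ and $\Ga^l_{i0}=\tfrac12 g^{lm}\pa_t g_{im}$, gives precisely the asserted correction $\tfrac1{4f^2}g^{lm}(\pa_t g_{jk}\pa_t g_{im}-\pa_t g_{ik}\pa_t g_{jm})$.

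I expect the main obstacle to be the bookkeeping in the $R_{0i0j}$ case: correctly extracting the lapse Hessian $f\na\dd f(\pa_i,\pa_j)$ from the interplay of $\pa_i\Ga^l_{00}$ with the spatial part of $\Ga^e_{00}\Ga^l_{ie}$, tracking the cancellation of the $\pa_i f\,\pa_j f$ terms, and pinning down the $-\tfrac14$ coefficient through the $-\tfrac12+\tfrac14$ cancellation coming from the inverse-metric derivative. As a structural check one can recognise the three identities as the Ricci (or Mainardi), Codazzi and Gauss equations respectively for the slices $t=\const$, which carry lapse $f$, zero shift and second fundamental form proportional to $\pa_t g_{ij}$; this ADM viewpoint confirms the shape of each formula without re-deriving it.
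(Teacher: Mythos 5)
Your proposal is correct and follows essentially the same route as the paper: direct substitution of the Christoffel symbols of Lemma~\ref{lem:christoff-Gg} into the coordinate formula for $R^d_{abc}$, with the same identity $\pa_ig_{jk}-\pa_jg_{ik}=g_{lj}\Ga^l_{ik}-g_{li}\Ga^l_{jk}$ driving the $R^0_{ijk}$ computation. The paper writes out only that second formula, declaring the other two routine and citing the standard Mainardi, Codazzi and Gauss identities --- precisely the ADM check you mention at the end --- so your detailed treatments of $R_{0i0j}$ and ${}^{\Gg}R^l_{ijk}$ (including the $-\tfrac12+\tfrac14$ bookkeeping and the $\pa_if\,\pa_jf$ cancellation, which I verified) simply supply steps the paper omits.
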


\begin{proof}  In fact these follow from standard identities, the first known as the Mainardi equation, the second the Codazzi identity and the third, the Gauss equation, see \cite{Ba-Is}.  We will prove the second of the above formulae, the other two being routine calculations.  We have
\begin{eqnarray*}
R^0_{ijk}  & = & \pa_i\Ga^0_{jk} - \pa_j\Ga^0_{ik} + \Ga^0_{jk}\Ga^0_{i0} - \Ga^0_{ik} \Ga^0_{j0} + \Ga^l_{jk}\Ga^0_{il} - \Ga^l_{ik}\Ga^0_{jl} \\ 
 & = & \frac{1}{2f^2}\pa_t(\pa_ig_{jk} - \pa_jg_{ik}) - \frac{1}{2f^2}\left( \pa_i(\ln f)\pa_tg_{jk} - \pa_j(\ln f)\pa_tg_{ik}\right) \\
 & & \qquad \qquad  + \frac{1}{2f^2}\left( \Ga^l_{jk}\pa_tg_{il} - \Ga^l_{ik}\pa_tg_{jl}\right) \,.
\end{eqnarray*}
However, note that
$$
\pa_ig_{jk} - \pa_jg_{ik} = 2g_{lj}\Ga^l_{ik} + \pa_kg_{ij} = - 2g_{li}\Ga^l_{jk} - \pa_kg_{ij}
$$
so that
$$
\pa_ig_{jk} - \pa_jg_{ik} = g_{lj}\Ga^l_{ik} - g_{li}\Ga^l_{jk}\,.
$$
Substitution of this equality into the above expression now gives the formula of the lemma.
\end{proof}

From Lemma \ref{lem:christoff-Gg}, we have the following consequences. 

\begin{corollary} \label{cor:cov-der-U}  The covariant derivative of $U$ and $\pa_t + fU$ along null curves defined by $U$ is given respectively by the expressions:
$$
\begin{array}{lll}
{\rm (i)}  \ds \quad  \na^{\Gg}_{\pa_t + fU}U & = & \ds \frac{\pa U}{\pa t} + f\na^g_UU + \frac{1}{2}g^{kl}(\pa_tg_{il})u^i\pa_k \\
 & & \qquad  + \left( \dd\ln f(U) + \frac{1}{2f}u^iu^j\pa_tg_{ij}\right) \pa_t\\
{\rm (ii)}  \ds  \quad \na^{\Gg}_{\pa_t + fU}(\pa_t + fU) & = & f \grad_gf + (\pa_t + fU)(f) U \\
 & & \qquad \ds + \frac{f}{2}u^ig^{jk}\pa_tg_{ij} \pa_k  + (\pa_t\ln f + U(f))\pa_t\,.
\end{array} 
$$
where, on writing $U = u^i(t, x^k)\pa_i$ with respect to (fixed) local coordinates $(x^k)$ on $M$, we set $\pa U/\pa t = (\pa u^i/\pa t)\pa_i$\,.  
\end{corollary}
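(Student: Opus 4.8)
The plan is to prove both formulae by direct computation from Lemma~\ref{lem:christoff-Gg}: expand the connection $\na^{\Gg}$ in the fixed coordinate frame $(\pa_t,\pa_1,\pa_2,\pa_3)$, insert the Christoffel symbols just computed, and then sort the resulting terms into those tangent to the slices $t=\text{const}$ and those along $\pa_t$. Throughout I would write $U = u^i\pa_i$, so that $U$ has no $\pa_t$-component and $\pa U/\pa t = (\pa_t u^i)\pa_i$, and I would use the bilinearity $\na^{\Gg}_{\pa_t + fU} = \na^{\Gg}_{\pa_t} + f\,\na^{\Gg}_U$ together with the Leibniz rule to reduce everything to the four elementary brackets $\na^{\Gg}_{\pa_t}\pa_i$, $\na^{\Gg}_{\pa_i}\pa_j$, $\na^{\Gg}_{\pa_t}\pa_t$ and $\na^{\Gg}_{\pa_i}\pa_t$.

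For part~(i) I would treat the two summands separately. From $\na^{\Gg}_{\pa_t}U = (\pa_t u^i)\pa_i + u^i\na^{\Gg}_{\pa_t}\pa_i$ and $\na^{\Gg}_{\pa_t}\pa_i = \Ga^0_{0i}\pa_t + \Ga^k_{0i}\pa_k = (\pa_i\ln f)\pa_t + \tfrac12 g^{kj}(\pa_t g_{ij})\pa_k$ one reads off the term $\pa U/\pa t$, the tangential term $\tfrac12 g^{kl}(\pa_t g_{il})u^i\pa_k$, and the normal contribution $\dd\ln f(U)\,\pa_t$. From $\na^{\Gg}_U U = u^i(\pa_i u^j)\pa_j + u^iu^j\na^{\Gg}_{\pa_i}\pa_j$ with $\na^{\Gg}_{\pa_i}\pa_j = \Ga^0_{ij}\pa_t + {}^g\Ga^k_{ij}\pa_k$ the key recognition is that the tangential sum $u^i(\pa_i u^j)\pa_j + u^iu^j\,{}^g\Ga^k_{ij}\pa_k$ is precisely $\na^g_U U$, while the normal part is $\tfrac{1}{2f^2}u^iu^j(\pa_t g_{ij})\pa_t$. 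Multiplying the second summand by $f$, adding, and collecting the $\pa_t$-components yields~(i).

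For part~(ii) I would set $W = \pa_t + fU$ and split $\na^{\Gg}_W W = \na^{\Gg}_W\pa_t + \na^{\Gg}_W(fU)$. The first term equals $\na^{\Gg}_{\pa_t}\pa_t + f u^i\na^{\Gg}_{\pa_i}\pa_t$; inserting $\Ga^0_{00} = \pa_t\ln f$, $\Ga^i_{00} = fg^{ij}\pa_j f$ (which I would recognise as $f\grad_g f$), and using $\na^{\Gg}_{\pa_i}\pa_t = \na^{\Gg}_{\pa_t}\pa_i$ (symmetry of the lower indices, so this bracket is already known from~(i)) produces the $f\grad_g f$ term, the tangential term $\tfrac{f}{2}u^i g^{jk}(\pa_t g_{ij})\pa_k$ and a $\pa_t$-contribution. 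For the second term I would apply $\na^{\Gg}_W(fU) = (Wf)U + f\,\na^{\Gg}_W U$ and substitute part~(i) for $\na^{\Gg}_W U$. Collecting all tangential and all $\pa_t$ pieces, and using $u^i\pa_i f = U(f)$ to name the normal coefficient, gives~(ii).

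The only real obstacle is bookkeeping rather than ideas. Because $\Gg$ couples the time and space directions, the mixed symbols $\Ga^0_{0i}$, $\Ga^0_{ij}$ and $\Ga^k_{0i}$ feed each elementary bracket into \emph{both} the tangential and the $\pa_t$ slots, so one must carefully track where each term lands and then reassemble the bare coordinate sums $u^i(\pa_i u^j)\pa_j + u^iu^j\,{}^g\Ga^k_{ij}\pa_k$, $g^{ij}(\pa_j f)\pa_i$ and $u^i\pa_i f$ into the intrinsic objects $\na^g_U U$, $\grad_g f$ and $U(f)$. No input beyond Lemma~\ref{lem:christoff-Gg} and the product rule is required; in particular the symmetry $\Ga^a_{0i} = \Ga^a_{i0}$ is what lets the bracket $\na^{\Gg}_{\pa_i}\pa_t$ appearing in~(ii) be taken directly from the computation already carried out for~(i).
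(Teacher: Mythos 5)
Your treatment of part (i) is correct and is exactly the paper's (implicit) argument: the corollary is presented as a direct consequence of Lemma \ref{lem:christoff-Gg}, and expanding $\na^{\Gg}_{\pa_t}U + f\na^{\Gg}_UU$ in coordinates with those Christoffel symbols yields the stated formula.

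Part (ii), however, contains a genuine gap: your final step ``collecting all tangential and all $\pa_t$ pieces \ldots gives (ii)'' would fail if actually carried out. Your own decomposition reads
$$
\na^{\Gg}_{\pa_t+fU}(\pa_t+fU) \;=\; \na^{\Gg}_{\pa_t+fU}\pa_t \;+\; \big((\pa_t+fU)f\big)\,U \;+\; f\,\na^{\Gg}_{\pa_t+fU}U\,,
$$
and the first two terms \emph{already} account for every term in the stated formula (ii), since
$$
\na^{\Gg}_{\pa_t+fU}\pa_t \;=\; f\grad_gf + \tfrac{f}{2}\,u^ig^{jk}(\pa_tg_{ij})\pa_k + \big(\pa_t\ln f + U(f)\big)\pa_t\,.
$$
The leftover term $f\,\na^{\Gg}_{\pa_t+fU}U$, which by part (i) equals $f\,\pa U/\pa t + f^2\na^g_UU + \tfrac{f}{2}g^{kl}(\pa_tg_{il})u^i\pa_k + \big(U(f) + \tfrac12 u^iu^j\pa_tg_{ij}\big)\pa_t$, does not cancel against anything --- it is the only source of derivatives of the $u^i$ --- and it would in particular change the coefficient of $u^ig^{jk}(\pa_tg_{ij})\pa_k$ from $f/2$ to $f$. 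So you cannot simultaneously ``substitute part (i) for $\na^{\Gg}_W U$'' and arrive at (ii) as printed. A concrete check: take $f\equiv 1$, $g_t$ the flat Euclidean metric (so $\Gg$ is Minkowski and all Christoffel symbols vanish) and $U=(\cos t,\sin t,0)$; then the right-hand side of (ii) is zero, while $\na^{\Gg}_{\pa_t+U}(\pa_t+U) = \pa U/\pa t \neq 0$.

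What this reveals is that (ii) is not a consequence of Lemma \ref{lem:christoff-Gg} alone: it is the formula for $\na^{\Gg}_{\pa_t+fU}\pa_t + \big((\pa_t+fU)f\big)U$, hence valid exactly when $\na^{\Gg}_{\pa_t+fU}U = 0$, i.e.\ when $U$ is parallel along the null curves. That condition is guaranteed by the evolution \eqref{coupled-evol-0} introduced immediately after the corollary (Lemma \ref{lem:props-U}(ii)), and the proof of Lemma \ref{lem:props-U}(iv) is the only place the paper invokes (ii). A correct write-up of your approach must therefore either carry the term $f\na^{\Gg}_{\pa_t+fU}U$ explicitly in the final formula, or state the parallelism (equivalently, the evolution) hypothesis under which it vanishes; asserting that the collection closes up to give (ii) unconditionally papers over precisely this point.
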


In what follows, we suppose that $T = T_t$ is a symmetric covariant $2$-tensor defined on each $(M^3, g_t)$ satisfying $T(U,U) = 0$, which, along with the function $f$ is to be determined.  In view of Corollary \ref{cor:cov-der-U}(i), in order to preserve the length of $U$, we introduce the coupled evolution:
\begin{equation} \label{coupled-evol-0}
\left\{ \begin{array}{lll}
{\rm (i)} \qquad \ds \frac{\pa U}{\pa t} & = & - f\na^g_UU - \frac{1}{2}g^{kl}(\pa_tg_{il})u^i \pa_k \\
{\rm (ii)} \qquad \ds\frac{\pa g}{\pa t} & = & - 2 \,\theta \odot \left(\dd f\vert_{TM}\right) + 2T
\end{array}\right.
\end{equation}
where $\theta = \theta (t)$ is the $1$-form dual to $U$: $\theta = g(U, \ \cdot \ )$, $\al \odot \be$ is the symmetric product of $1$-forms: $\al \odot \be = \frac{1}{2}(\al \otimes \be + \be \otimes \al )$ and $U = u^i \pa_i$, where $(x^i)$ is a system of local coordinates on $M^3$.  Note that in the case of the flat metric, when we take $f \equiv 1$ and $T \equiv 0$, we recover the evolution (\ref{char-SFR})(ii).  If we substitute into (i) the expression for $\pa g/\pa t$ given by (ii), we obtain the alternative form:
\begin{equation} \label{coupled-evol}
\left\{ \begin{array}{lll}
{\rm (i)} \qquad \ds \frac{\pa U}{\pa t} & = & - f\na^g_UU + \frac{1}{2}\grad_gf + \frac{1}{2}U(f)U - \wh{T}(U) \\
{\rm (ii)} \qquad \ds\frac{\pa g}{\pa t} & = & - 2 \,\theta \odot \left(\dd f\vert_{TM}\right) + 2T
\end{array}\right.
\end{equation}
where $\wh{T}$ is the linear transformation defined by $g(\wh{T}(E), F) = T(E,F)$ at each point $(t, x)$ for all $E,F \in T_xM^3$.

In general, the null vector field $\pa_t + fU$ is no longer geodesic, the expression for $\ds \na_{\pa_t + fU} (\pa_t + fU)$ being given by Corollary \ref{cor:cov-der-U}, however, the following lemma gives the condition when it is geodesic.    

\begin{lemma} \label{lem:props-U}  Let the metric $g_t$ and the vector field $U_t$ evolve according to equations {\rm (\ref{coupled-evol})}.  Then

{\rm (i)}  $(\pa_t + fU) g(U,U) = 0$, in particular, if $U$ is of unit length on the hypersurface $t = 0$, then it remains so for all time;

{\rm (ii)} $ \ds \na^{\Gg}_{\pa_t + fU}U = 0$, in particular $U$ is parallel along null geodesics determined by $U$;

{\rm (iii)}  $(\pa_t + fU)\Gg (U, \pa_t) = 0$, in particular, if $U$ is tangent to $M^3$ for $t=0$, then it remains so for all time;

{\rm (iv)} The null field $\pa_t + fU$ is geodesic if and only if 
$$
\wh{T}(U) = - \frac{1}{2}\grad_gf + \frac{1}{2}U(f)U\,,
$$
where $\wh{T}$ is the linear mapping determined at each point $x$ by the identity $g(\wh{T}(E), F) = T(E,F)$ for all $E,F \in T_xM^3$.
\end{lemma}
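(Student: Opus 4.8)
The unifying idea of the plan is that each of the four assertions controls a scalar or a vector built from $U$ and $\pa_t$ along the null generator $Z = \pa_t + fU$, and each reduces to the two covariant-derivative formulas of Corollary \ref{cor:cov-der-U} once the coupled evolution (\ref{coupled-evol}) is substituted. I would keep in play two structural facts throughout: the metric $\Gg = -f^2\dd t^2 + g_t$ is block diagonal, so $\Gg(\pa_t,\pa_i) = 0$ and every spatial vector is $\Gg$-orthogonal to $\pa_t$; and $U = u^i\pa_i$ remains tangent to $M^3$ because (\ref{coupled-evol})(i) prescribes only its spatial components. The recurring computational step is to insert $\pa_t g_{ij} = -2(\theta\odot\dd f\vert_{TM})_{ij} + 2T_{ij}$ from (\ref{coupled-evol})(ii) into the $\pa_t g$-terms of Corollary \ref{cor:cov-der-U}, and to convert the resulting index expressions into $\grad_g f$, $U(f)$ and $\wh{T}(U)$ by means of $g^{jk}\theta_j = u^k$, $\theta(U) = g(U,U)$, $(\wh{T}(U))^k = g^{kl}T_{il}u^i$, together with the hypothesis $T(U,U) = 0$.

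For (i) I would begin from metric compatibility of $\na^{\Gg}$: since $U$ is spatial, $\Gg(U,U) = g(U,U)$ and hence $(\pa_t + fU)g(U,U) = 2\Gg(\na^{\Gg}_Z U, U)$. In the expression for $\na^{\Gg}_Z U$ given by Corollary \ref{cor:cov-der-U}(i) the $\pa_t$-term is annihilated by $\Gg(\pa_t, U) = 0$, so only the spatial part contributes; substituting (\ref{coupled-evol})(i)--(ii) and collecting terms (using $T(U,U) = 0$) I expect this spatial part to collapse to $\tfrac12\bigl(1 - g(U,U)\bigr)\grad_g f$, whence $(\pa_t + fU)g(U,U) = U(f)\bigl(1 - g(U,U)\bigr)$. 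Setting $h = g(U,U) - 1$ this is the homogeneous linear transport equation $(\pa_t + fU)h = -U(f)\,h$ along the null curves, so $h$ vanishes identically once it vanishes on $t = 0$; thus $g(U,U)\equiv 1$, the stated derivative is zero, and unit length is preserved.

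Part (ii) is then immediate: computing $\na^{\Gg}_Z U$ from Corollary \ref{cor:cov-der-U}(i) without discarding the temporal term, its spatial part $\tfrac12(1 - g(U,U))\grad_g f$ and its temporal part $\tfrac{U(f)}{f}(1 - g(U,U))\pa_t$ both carry the factor $1 - g(U,U)$, which is zero by (i); hence $\na^{\Gg}_Z U = 0$. For (iii), metric compatibility together with (ii) gives $(\pa_t + fU)\Gg(U,\pa_t) = \Gg(U, \na^{\Gg}_Z\pa_t)$, and writing $\na^{\Gg}_Z\pa_t = \na^{\Gg}_Z Z - Z(f)U$ (again by (ii)) reduces this to $\Gg(U, \na^{\Gg}_Z Z) - Z(f)$; evaluating $\Gg(U, \na^{\Gg}_Z Z)$ from Corollary \ref{cor:cov-der-U}(ii) with $g(U,U) = 1$ and $T(U,U) = 0$ produces exactly $Z(f)$, so the derivative vanishes. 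One may alternatively observe that $\Gg(U,\pa_t)\equiv 0$ because $U$ is spatial, so spatiality is preserved trivially.

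The decisive statement is (iv), and the main obstacle is the organisation of $\na^{\Gg}_Z Z$ from Corollary \ref{cor:cov-der-U}(ii) into its spatial and temporal components after the substitution. The temporal component is $\tfrac{Z(f)}{f}\pa_t$ irrespective of $T$, while the spatial component, after replacing $\pa_t g_{ij}$ via (\ref{coupled-evol})(ii) and using $g(U,U) = 1$, should reduce to $\tfrac{f}{2}\grad_g f + \bigl(Z(f) - \tfrac{f}{2}U(f)\bigr)U + f\wh{T}(U)$. The key recognition is that ``$Z$ geodesic'' means $\na^{\Gg}_Z Z \parallel Z$; matching temporal components forces the factor of proportionality to be $\tfrac{Z(f)}{f}$, so the spatial component must equal $Z(f)U$. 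Equating the computed spatial component to $Z(f)U$ and solving for $\wh{T}(U)$ yields precisely $\wh{T}(U) = -\tfrac12\grad_g f + \tfrac12 U(f)U$, and conversely this choice gives $\na^{\Gg}_Z Z = \tfrac{Z(f)}{f}Z$, establishing the equivalence. I expect the only real difficulty to be the index bookkeeping converting the tensorial terms $\theta\odot\dd f\vert_{TM}$ and $T$ into the vector quantities $\grad_g f$, $U(f)U$ and $\wh{T}(U)$; the conceptual content is the clean split into a temporal part, automatically proportional to $Z$, and a spatial part that carries the geodesic condition.
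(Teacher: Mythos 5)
Your proposal is correct, and its engine is the same as the paper's: everything is read off from Corollary \ref{cor:cov-der-U} after substituting the coupled evolution, with the spatial/temporal split of $\Gg = -f^2\dd t^2 + g_t$ doing the organisational work. Part (iv) in particular is the paper's argument in different clothing: you match temporal components to force the proportionality factor $Z(f)/f$ and then equate the spatial part to $Z(f)U$, while the paper factors out $\left(\pa_t\ln f + U(f)\right)(\pa_t + fU)$ and observes that the remainder $\frac{f}{2}\grad_gf - \frac{f}{2}U(f)U + f\wh{T}(U)$ is purely spatial; these are identical computations. Where you genuinely deviate is part (i). The paper disposes of it in one line by citing the pre-substitution form (\ref{coupled-evol-0})(i), under which $(\pa_t + fU)g(U,U) = 0$ holds identically (for any initial length); but the passage from (\ref{coupled-evol-0}) to the form (\ref{coupled-evol}) assumed in the lemma already uses $g(U,U) = 1$, so that shortcut carries a mild circularity when the hypothesis is taken, as stated, to be (\ref{coupled-evol}). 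Your route — showing directly from (\ref{coupled-evol}) that the spatial part of $\na^{\Gg}_{\pa_t + fU}U$ is $\frac{1}{2}\bigl(1 - g(U,U)\bigr)\grad_gf$, hence that $h = g(U,U) - 1$ satisfies the linear transport equation $(\pa_t + fU)h = -U(f)\,h$, and concluding by uniqueness — is self-contained and slightly more careful; the price is that the literal identity $(\pa_t + fU)g(U,U) = 0$ emerges only after unit length has been propagated, which is how the statement should be read anyway. Your part (iii) also differs cosmetically: the paper recomputes $\na^{\Gg}_{\pa_t + fU}\pa_t$ from the Christoffel symbols of Lemma \ref{lem:christoff-Gg}, whereas you derive it from part (ii) via $\pa_t = (\pa_t + fU) - fU$; and your side remark that $\Gg(U,\pa_t) \equiv 0$ holds trivially, because $U = u^i\pa_i$ stays spatial under the evolution and $\Gg$ is block diagonal, is valid and explains why (iii) is really a consistency check rather than a constraint.
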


\begin{proof}  The first part of the Lemma follows directly from (\ref{coupled-evol-0})(i) and the second from Corollary \ref{cor:cov-der-U}(i).  To show part (iii), it suffices to note that
$$
\na^{\Gg}_{\pa_t + fU}\pa_t = (\pa_t\ln f + U(f))\pa_t + f\grad_gf + \frac{f}{2}u^ig^{kj}\pa_tg_{ij}\pa_k\,,
$$
so that, from (ii),
$$
(\pa_t + fU)\Gg (U, \pa_t) = \Gg (U, \na^{\Gg}_{\pa_t + fU}\pa_t) = fU(f) + \frac{f}{2}(\pa_tg)(U,U) = 0\,.
$$  
For part (iv), the field $\pa_t + fU$ is geodesic if and only if 
$$
\na_{\pa_t + fU}(\pa_t + fU) = \al (\pa_t + fU)\,,
$$
for some function $\al$.  By Corollary \ref{cor:cov-der-U}(ii),
\begin{eqnarray*}
\na^{\Gg}_{\pa_t + fU}(\pa_t + fU) & = & f \grad_gf + (\pa_t + fU)(f) U 
  + \frac{f}{2}u^ig^{jk}\pa_tg_{ij} \pa_k  + (\pa_t\ln f + U(f))\pa_t \\
 & = & \left\{ \pa_t \ln f + U(f)\right\} (\pa_t + fU) + f\grad_g f \\
 & & \qquad  + \frac{f}{2}u^ig^{kj} ( - \theta_i \pa_jf - \theta_j\pa_if + 2T_{ij}) \pa_k \\
 & = & \left\{ \pa_t \ln f + U(f)\right\} (\pa_t + fU) + \frac{f}{2}\grad_g f - \frac{f}{2}U(f)U + f\wh{T}(U)\,.
\end{eqnarray*}
Since the last three terms are tangent to $M^3$ (there is no $\pa_t$-component), the geodesic condition is equivalent to the vanishing of their sum, which gives the equation of (iii). 
\end{proof}

Let $(M^3, g_t, U_t)$ evolve according to (\ref{coupled-evol}) and let $\{ X_t, Y_t\}$ be an evolving orthonormal frame tangent to $M^3$ and orthogonal to $U_t$, which we call a \emph{complementary frame}.  Write $Z = X - \ii Y$ and let $\si =\si (t,x)$ and $\rho = \rho (t,x)$ $((t,x)\in I \times M^3)$ be the quantities defined by:
$$
\begin{array}{l}
\si = \frac{1}{2}(\Ll_Ug)(Z, Z) = g(\na^g_ZU, Z)  =  \Gg (\na^{\Gg}_ZU, Z) \\
\rho = g(\na^g_{\ov{Z}}U, Z)  =  \Gg (\na^{\Gg}_{\ov{Z}}U, Z)\,.
\end{array}
$$
The first of these is called the \emph{shear} (see \cite{Pe-Ri-2}) and measures the discrepancy from being conformal of the foliation determined by $U$.  It is quadratic in $Z$ and so if we choose an alternative complex frame $Z' = e^{\ii \theta}Z$, then the shear is now given by $\si ' = e^{2\ii\theta} \si$, so that although $\si$ is not an invariant quantity, the property of its vanishing is.  When necessary, we shall write $\si (Z, Z)$ to indicate the dependence on $Z$.   On the other hand, $\rho$ is invariant independent of the choice of $Z$.  Specifically,
$$
g(\na^g_{\ov{Z}}U, Z) = - g(U, \na^g_XX + \na^g_YY - \ii [X, Y])
$$
where $g(U, \na^g_XX + \na^g_YY)$ is the mean curvature of the distribution on $M^3$ orthogonal to the vector field $U$, and where $g(U, [X, Y])$ measures the integrability of this distribution. 

We are mainly interested in the evolution of the shear $\si$, however, we will find that both quantities $\si$ and $\rho$ are intimately related and so we need to calculate both of their evolutions under the flow (\ref{coupled-evol}).  Before doing this, we establish some basic formulae that will be useful in the computations that follow.  In what follows, we assume that $g_t$ and $U_t$ evolve according to (\ref{coupled-evol}).  The following lemma is a simple calculation involving the Christoffel symbols as given by Lemma \ref{lem:christoff-Gg}.

\begin{lemma}  \label{lem:comp}  {\rm (i)}  For a vector field $E = e^i\pa_i$ tangent to $M^3$, we have
$$
\na_E\pa_t = E(\ln f) \pa_t - \frac{f}{2}g(E, U) \grad_g \ln f - \frac{f}{2}E(\ln f) U + e^i T_{ij}g^{kj}\pa_k\,;
$$

{\rm (ii)}  
$$
\Gg (\na^{\Gg}_{\pa_t}Z, \ov{Z}) = T(Z, \ov{Z})\,;
$$
\end{lemma}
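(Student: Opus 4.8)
The plan is to read both identities directly off the Christoffel symbols of Lemma~\ref{lem:christoff-Gg}, feeding in the metric evolution (\ref{coupled-evol})(ii) in the form
$$
\pa_t g_{ij} = -\left(\theta_i\pa_j f + \theta_j\pa_i f\right) + 2T_{ij}, \qquad \theta_i = g_{ij}u^j .
$$
The single mechanism behind both parts is that, once a mixed covariant derivative is contracted against a vector orthogonal to $U$, the symmetric-product term $\theta\odot\dd f$ is annihilated and only the $T$-contribution survives.

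For part (i), I would expand, using the symmetry $\Ga^c_{ab}=\Ga^c_{ba}$,
$$
\na_E\pa_t = e^i\na_{\pa_i}\pa_0 = e^i\left(\Ga^0_{i0}\pa_0 + \Ga^k_{i0}\pa_k\right) = E(\ln f)\pa_t + \tfrac12 e^i g^{kj}(\pa_t g_{ij})\pa_k ,
$$
inserting $\Ga^0_{i0}=\pa_i\ln f$ and $\Ga^k_{i0}=\tfrac12 g^{kj}\pa_t g_{ij}$. Substituting the displayed expression for $\pa_t g_{ij}$ and using the bookkeeping identities $e^i\theta_i = g(E,U)$, $\;g^{kj}(\pa_j f)\pa_k = \grad_g f = f\grad_g\ln f$, $\;g^{kj}\theta_j\pa_k = U$ and $e^i\pa_i f = f\,E(\ln f)$, the two pieces of $\theta\odot\dd f$ assemble into $-\tfrac{f}{2}g(E,U)\grad_g\ln f$ and $-\tfrac{f}{2}E(\ln f)U$, while the $T$-piece gives $e^i T_{ij}g^{kj}\pa_k$. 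This is exactly the claimed formula, and the step is purely mechanical.

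For part (ii), I would first reduce $\na^\Gg_{\pa_t}Z$ to the quantity controlled by (i). Since $[\pa_t,\pa_i]=0$, the complementary frame may be extended off the slice with $t$-independent coordinate components, so that $[\pa_t, Z]=0$; torsion-freeness then gives $\na^\Gg_{\pa_t}Z = \na^\Gg_Z\pa_t$. Applying part (i) with $E=Z$ and pairing with $\ov Z$ through $\Gg$, the $\pa_t$-term drops because $\ov Z$ is tangent to $M^3$ and $\Gg(\pa_t,\ov Z)=0$; the two middle terms drop because $g(Z,U)=g(\ov Z,U)=0$ by the defining orthogonality of the complementary frame; and the surviving term yields $z^i T_{ij}g^{kj}g_{kl}\ov{z}^l = z^i\ov{z}^j T_{ij} = T(Z,\ov Z)$, as required.

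The point demanding care is precisely this reduction: a priori one has $\Gg(\na^\Gg_{\pa_t}Z,\ov Z) = \Gg(\na^\Gg_Z\pa_t,\ov Z) + g\!\left(\tfrac{\pa Z}{\pa t},\ov Z\right)$, and it is the vanishing of the frame-derivative term $g(\pa Z/\pa t,\ov Z)$, guaranteed by the chosen extension of the complementary frame, that makes the identity hold on the nose. The orthogonality of $Z,\ov Z$ to $U$ is then what isolates $T(Z,\ov Z)$ from the $\theta\odot\dd f$ contribution; everything else is the same routine Christoffel-symbol contraction used in part (i).
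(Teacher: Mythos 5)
Your proof is correct and is exactly the calculation the paper intends: the paper offers no argument beyond the remark that the lemma ``is a simple calculation involving the Christoffel symbols as given by Lemma~\ref{lem:christoff-Gg}'', and your part (i) is precisely that computation, with the evolution equation (\ref{coupled-evol})(ii) substituted for $\pa_t g_{ij}$ and the contractions $e^i\theta_i = g(E,U)$, $g^{kj}\theta_j\pa_k = U$ carried out correctly.

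The subtlety you flag in part (ii) is real, not cosmetic, and is worth recording. The paper defines $Z = X - \ii Y$ from an \emph{evolving} orthonormal frame $\{X_t, Y_t\}$, and for that extension the identity would actually fail: metric compatibility gives
$$
2\,\Re\, \Gg\big(\na^{\Gg}_{\pa_t}Z, \ov{Z}\big) = \pa_t\big( \Gg (Z, \ov{Z})\big) = 0
$$
whenever the frame stays normalised, whereas $T(Z,\ov{Z}) = T(X,X) + T(Y,Y)$ is real and generically nonzero. So the stated formula holds only for the frozen extension ($\pa Z/\pa t = 0$ in coordinates) that you adopt, under which $[\pa_t, Z]=0$, hence $\na^{\Gg}_{\pa_t}Z = \na^{\Gg}_Z\pa_t$, and part (i) delivers the answer after the orthogonality of $Z, \ov{Z}$ to $U$ and to $\pa_t$ kills the remaining terms. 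Your reading is the one under which the lemma is true and is the one consistent with how it is used to compute $\al_1$ in the proof of Proposition~\ref{prop:prop-conf}; the paper leaves this choice of extension entirely implicit.
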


There are two concepts of shear-free ray congruence that we can consider on the space-time $\Mm^4 = I \times M^3$ endowed with the metric $\Gg$, both of which coincide in the flat case.  In both cases, we suppose that the metric $g_t$ and the unit vector field $U_t$ evolve according to (\ref{coupled-evol}) with the condition of Lemma \ref{lem:props-U}(iv) satisfied, so the null curves generated by $U$ are geodesic.  The first notion requires that Lie transport of vectors in the `screen space', that is the orthogonal complement of $U$ in $TM^3$, along the light rays be conformal; as in the flat case, we shall call this a \emph{shear-free ray congruence} (SFR).  The second concept requires that the evolution (\ref{coupled-evol}) \emph{preserves the property that the vector field $U$ be tangent to a conformal foliation}; we shall refer to the corresponding ray congruence as one that generates conformal foliations (CFGR).  The following proposition characterises shear-free ray congruences.

\begin{proposition} \label{prop:SFR}  Let $(M^3, g_t, U_t)$ evolve according to {\rm (\ref{coupled-evol})} and suppose that $\wh{T}(U) = - \frac{1}{2}\grad_gf + \frac{1}{2}U(f)U$\,, so that the null curves generated by $U$ are geodesic.  Then the congruence is shear-free in $(\Mm^4, \Gg = - f^2\dd t^2 + g_t)$ if and only if
$$
f\si (Z, Z) + T(Z, Z) = 0\,,
$$
for any complementary frame $Z$.
\end{proposition}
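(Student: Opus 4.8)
The plan is to identify the generator of the light rays as the null field $K = \pa_t + fU$, which is geodesic by hypothesis together with Lemma \ref{lem:props-U}(iv), and then to express the shear-free condition directly in terms of $K$ and the complex screen frame $Z = X - \ii Y$. First I would record the orthogonality relations in the space-time metric: since $Z$ is tangent to $M^3$ and $g_t$-orthogonal to $U$, one has $\Gg(\pa_t, Z) = 0$ and $\Gg(U, Z) = 0$, hence $\Gg(K, Z) = 0$, while $\Gg(Z, Z) = g_t(X - \ii Y, X - \ii Y) = 0$ and $\Gg(Z, \ov{Z}) = 2$. Thus $Z, \ov{Z}$ span the complexified screen space, which is $\Gg$-orthogonal to the null direction $K$.

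Next I would translate the geometric requirement---that Lie transport of the screen space along the rays be conformal---into the analytic statement that the symmetric trace-free part of $(\Ll_K \Gg)$ restricted to the screen distribution vanishes. In the complex null frame this is exactly the single equation $(\Ll_K \Gg)(Z, Z) = 0$, the $(Z, \ov{Z})$-component carrying only the conformally irrelevant expansion. Applying the torsion-free identity $(\Ll_K \Gg)(E, F) = \Gg(\na^{\Gg}_E K, F) + \Gg(E, \na^{\Gg}_F K)$ reduces this to $\Gg(\na^{\Gg}_Z K, Z) = 0$. I would also observe that this scalar scales homogeneously under $K \mapsto \phi K$, because $\Gg(\na^{\Gg}_Z(\phi K), Z) = \phi\, \Gg(\na^{\Gg}_Z K, Z)$ once $\Gg(K, Z) = 0$ is used, so the possibly non-affine parametrisation of the geodesic field is immaterial to the vanishing condition.

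Finally I would compute $\Gg(\na^{\Gg}_Z K, Z)$ by the Leibniz rule. Writing $\na^{\Gg}_Z K = \na^{\Gg}_Z \pa_t + Z(f)\, U + f\, \na^{\Gg}_Z U$ and pairing with $Z$, the middle term drops since $\Gg(U, Z) = 0$, and the last term gives $f\, \Gg(\na^{\Gg}_Z U, Z) = f\si(Z, Z)$ by the definition of the shear. For the first term I would feed $E = Z$ into Lemma \ref{lem:comp}(i): the terms proportional to $\pa_t$, to $U$, and to $g(Z, U)$ each pair to zero against $Z$ (using $\Gg(\pa_t, Z) = \Gg(U, Z) = 0$ and $g(Z, U) = 0$), leaving only $z^i T_{ij} g^{kj} \Gg(\pa_k, Z) = z^i T_{ij} z^j = T(Z, Z)$. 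Collecting the surviving pieces yields $\Gg(\na^{\Gg}_Z K, Z) = f\si(Z, Z) + T(Z, Z)$, whence the stated equivalence is immediate.

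The only genuinely delicate step is the second paragraph: justifying that the verbal notion of shear-freeness (conformality of Lie transport of the screen space) is faithfully captured by the single scalar equation $(\Ll_K \Gg)(Z, Z) = 0$. This requires realising the screen space as the quotient $K^{\perp}/\langle K\rangle$, checking that $(\Ll_K \Gg)$ descends to a well-defined bilinear form there, and identifying its symmetric trace-free part with the $(Z, Z)$-component in the chosen frame. Once this identification is in place, the computations of the third paragraph are routine given Lemma \ref{lem:comp}.
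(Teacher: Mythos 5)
Your proposal is correct and follows essentially the same route as the paper: the paper's proof likewise states the shear-free condition as $\bigl(\Ll_{(\pa_t+fU)}\Gg\bigr)(Z,Z)=0$, rewrites it as $\Gg\bigl(\na_Z(\pa_t+fU),Z\bigr)=0$, and then invokes Lemma \ref{lem:comp}(i) to obtain $f\si(Z,Z)+T(Z,Z)=0$. Your third paragraph simply makes explicit the Leibniz expansion and the orthogonality cancellations that the paper leaves to the reader.
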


\begin{proof}  The condition that the ray congruence be shear-free is given by
$$
\left( \Ll_{(\pa_t + fU)}\Gg \right)(Z, Z) = 0\,,
$$
equivalently by
$$
\Gg \left( \na_Z(\pa_t + fU), Z\right) = 0\,.
$$
The condition of the proposition now follows from Lemma \ref{lem:comp}(i).
\end{proof}

In order to generate an SFR, a natural evolution now presents itself, given by the following theorem. Write $\mu = \na^g_UU$ for the mean curvature of the integral curves of $U$ and let $\mu^{\flat} = g(\mu, \ \cdot \ )$ be its dual.

\begin{theorem} \label{thm:SFR} Let $f : I \times M^3 \ra \RR$ $(I\subset \RR$ open, connected) be a non-vanishing function and let $(M^3, g_t, U_t)$ evolve according to the equations:
\begin{equation} \label{coupled-evol-SFR}
\left\{ \begin{array}{lll}
{\rm (i)} \qquad \ds \frac{\pa U}{\pa t} & = & - f\na^g_UU + \grad_gf \\
{\rm (ii)} \qquad \ds\frac{\pa g}{\pa t} & = & - 2\theta \odot \left( 2\dd f\vert_{TM} - f\mu^{\flat} - U(f)\theta \right) - f\Ll_Ug\,.
\end{array}\right.
\end{equation}
Then the null curves generated by $U$ form a shear-free ray congruence.  Furthermore, provided that $M^3$ is compact, for given initial conditions $(g_0, U_0)$, the above system has a solution for small time.
\end{theorem}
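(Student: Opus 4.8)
The plan is to recognise (\ref{coupled-evol-SFR}) as a special case of the general coupled evolution (\ref{coupled-evol}) and then invoke Proposition \ref{prop:SFR}. First I would identify the tensor $T$ hidden in (\ref{coupled-evol-SFR}). Comparing (\ref{coupled-evol-SFR})(i) with (\ref{coupled-evol})(i) forces $\wh{T}(U) = -\frac{1}{2}\grad_g f + \frac{1}{2}U(f)U$, which is exactly the geodesic condition of Lemma \ref{lem:props-U}(iv); while comparing the two expressions for $\pa g/\pa t$ yields
$$
T = -\,\theta \odot \left( \dd f\vert_{TM} - f\mu^{\flat} - U(f)\theta\right) - \frac{f}{2}\Ll_Ug\,.
$$
I would check directly that this $T$ is symmetric, that $T(U,U)=0$ (each summand vanishes on $(U,U)$ since $\theta(U)=1$, $\omega(U)=0$ for $\omega = \dd f\vert_{TM} - f\mu^\flat - U(f)\theta$, and $(\Ll_Ug)(U,U)=0$), and that the contraction $T(U,\,\cdot\,)$ reproduces the same $\wh{T}(U)$, so that (i) and (ii) of (\ref{coupled-evol-SFR}) are consistent and do constitute an instance of (\ref{coupled-evol}). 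Then for any complementary frame $Z$ the orthogonality $\theta(Z)=g(U,Z)=0$ annihilates the $\theta\odot(\cdots)$ term, leaving $T(Z,Z) = -\frac{f}{2}(\Ll_Ug)(Z,Z) = -f\si(Z,Z)$, so $f\si(Z,Z)+T(Z,Z)=0$. By Proposition \ref{prop:SFR} the congruence is shear-free.

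\textbf{Second claim (short-time existence).} Since $f$ is prescribed, (\ref{coupled-evol-SFR}) is a first-order evolution system for the pair $V=(g_{ij},U^i)$ of $9$ unknowns on the compact manifold $M^3$. Writing $\Ll_Ug$, $\na^g_UU$ and $\mu^\flat$ in local coordinates, every term on the right is algebraic in $V$ and $\dd f$ together with \emph{first} spatial derivatives of $V$ (the Christoffel symbols carry $\pa g$; the Lie derivative carries $\pa g$ and $\pa U$; $\na^g_UU$ carries $\pa U$ and $\pa g$). Hence the system takes the quasilinear form
$$
\frac{\pa V}{\pa t} = \sum_{k=1}^3 A^k(V)\,\pa_k V + B(V,t,x)\,,
$$
with coefficients smooth on the open set where $g$ is positive definite. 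The leading term common to both blocks is the transport operator $-f\,U^k\pa_k\,\mathrm{Id}$, whose symbol $-f(U\cdot\xi)\mathrm{Id}$ is real and symmetric; the remaining principal terms couple $g$ to $\pa U$ and $U$ to $\pa g$.

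\textbf{Main obstacle.} The crux is to show this coupled system is symmetric hyperbolic. I would exhibit a symmetrizer $A^0(V)$ built from $g$ itself --- the induced inner product on symmetric $2$-tensors for the $g$-block and $g$ on the $U$-block --- and verify that $A^0A^k$ is symmetric for each $k$. The point is that the two off-diagonal couplings arise respectively from $\Ll_Ug$ (derivatives of $U$ entering the metric equation) and from the Christoffel part of $\na^g_UU$ (derivatives of $g$ entering the vector equation), and these are formally adjoint with respect to the $g$-pairing, so their antisymmetric parts should cancel --- precisely as in the gauge-reduced (DeTurck-type) treatment of the Einstein equations. Granting symmetric hyperbolicity, the standard theory of Friedrichs symmetric hyperbolic systems on a compact manifold (energy estimates in $H^s$ with $s>5/2$, together with Picard iteration; cf. \cite{Ba-Is}) gives a unique solution on $[0,\ve)\times M^3$ for some $\ve>0$. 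I expect verifying this symmetry to be the genuine difficulty, since the transport structure alone guarantees well-posedness only if the coupling were subprincipal, which it is not.

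\textbf{Conclusion.} Finally I would confirm that the solution is geometrically admissible for small $t$: positive-definiteness of $g_t$ is an open condition inherited from $g_0$ by continuity, while $|U_t|=1$ is preserved because, $T$ satisfying $T(U,U)=0$, the evolution is of the form (\ref{coupled-evol}) and Lemma \ref{lem:props-U}(i) gives $(\pa_t+fU)g(U,U)=0$; as $g_0(U_0,U_0)=1$ this persists along the null curves foliating a neighbourhood of $t=0$. With the geodesic condition holding throughout by Lemma \ref{lem:props-U}(iv), the first part of the theorem then delivers the shear-free ray congruence.
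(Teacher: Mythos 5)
Your treatment of the shear-free claim is correct and is essentially the paper's own argument run in the opposite direction: the paper starts from the two conditions $\wh{T}(U) = -\tfrac{1}{2}\grad_g f + \tfrac{1}{2}U(f)U$ (Lemma \ref{lem:props-U}(iv)) and $T(Z,Z) = -f\si(Z,Z)$ (Proposition \ref{prop:SFR}), solves them to obtain $T = -\dd f\odot\theta + U(f)\theta^2 + f\theta\odot\mu^{\flat} - \tfrac{f}{2}\Ll_Ug$, and substitutes into \eqref{coupled-evol} to arrive at \eqref{coupled-evol-SFR}; you instead extract this same $T$ from \eqref{coupled-evol-SFR} and verify the two conditions. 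The verifications you list ($T(U,U)=0$, the contraction $T(U,\cdot)$ matching $\wh{T}(U)$, and $T(Z,Z)=-f\si(Z,Z)$ because $\theta(Z)=0$ kills the $\theta\odot(\cdots)$ term) are all sound, so this half matches the paper.

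On short-time existence your route departs from the paper's, and it is here that your proposal has a genuine gap --- one you yourself flag. The key step of your plan, that the system is symmetric hyperbolic, is never established: you assert that the off-diagonal principal couplings (the $\pa u$ terms entering the $g$-equation through $\Ll_Ug$ and $f\theta\odot\mu^{\flat}$, and the $\pa g$ terms entering the $U$-equation through the Christoffel symbols) ``should cancel'' against a $g$-built symmetrizer, but this is a hope by analogy with the Einstein equations, not a computation; the term $2f\theta\odot\mu^{\flat}$ in particular injects both $\pa u$ and $\pa g$ into the $g$-block in a way whose compatibility with your proposed $A^0$ is not evident. Everything after ``granting symmetric hyperbolicity'' is therefore conditional. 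For comparison, the paper does not attempt any hyperbolicity analysis at all: it observes that in local coordinates \eqref{coupled-evol-SFR} is a first-order system of differential equations in the unknowns $u^i$ and $g_{ij}$ and dispatches existence on a compact domain by citing Cauchy's existence theorem (\cite{Di}, 10.4.5). Your instinct that this is delicate --- the right-hand side loses a derivative, so a naive Picard iteration in a fixed Banach space does not close, and transport structure alone does not suffice when the couplings are principal --- is legitimate, and in that sense your sketch is more honest about the analytic content than the paper's citation; but as a proof it is incomplete precisely at the step you identify as the crux, whereas the paper's argument, such as it is, is complete by its chosen appeal to a classical theorem.
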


\begin{proof}  By the geodesic condition Lemma \ref{lem:props-U}(iv) and the shear-free condition given by Proposition \ref{prop:SFR}, the tensor $T$ in \eqref{coupled-evol} is determined by the conditions:
\begin{eqnarray*}
\wh{T}(U) & = & - \frac{1}{2}\grad_g f + \frac{1}{2}U(f)U \\
 T(Z, Z) & = & - f\si (Z, Z)\,.
\end{eqnarray*}
This implies
$$
T = - \dd f\odot \theta + U(f) \theta^2 + f\theta\odot \mu^{\flat} - \frac{f}{2}\Ll_Ug\,.
$$
Substitution into \eqref{coupled-evol} now gives \eqref{coupled-evol-SFR}.

To prove the existence of solutions, we note that, with respect to a system of local coordinates $(x^i)$ on $M^3$, the system (\ref{coupled-evol-SFR}) is a first order system of differential equations in the coefficients $u^i$ and $g_{ij}$ of $U = u^i \pa_i$ and $g = g_{ij} \dd x^i \dd x^j$, respectively.  Hence, on a compact domain, for given initial conditions, we are assured of a solution for small time by Cauchy's existence theorem, see for example \cite{Di}(10.4.5).
\end{proof}

\begin{remark}  In the above theorem, there is no constraint on the initial vector field $U_0$.  At first sight, this appears to contradict the flat space case described in Section \ref{sec:2}, where each vector field $U_t$ is necessarily tangent to a conformal foliation ($\si \equiv 0$).  However, in general the metric $g_t$ will evolve and produce a space-time with non-trivial curvature.  In fact, if we take $f \equiv 1$, then
$$
\frac{\pa g}{\pa t} = 2\theta \odot \mu^{\flat} - \Ll_Ug
$$
where the right-hand side vanishes if and only if $U$ is tangent to a Riemannian foliation.
\end{remark}

\section{Congruences that generate conformal foliations}

We now address the second generalisation of an SFR on Minkowski space, that of a family of evolving conformal foliations.  This turns out to be much more difficult, but has the advantage of providing a field equation somewhat akin to the Schr\"odinger equation on space-time.

We suppose once more that we have a triple $(M^3, g_t, U_t)$ of a $3$-manifold, a metric and a unit vector field that evolve according to \eqref{coupled-evol}, where $t$ belongs to some connected interval $I\subset \RR$.  For convenience of notation, in what follows, we set $\tau = g(\na_UZ, \ov{Z})$.  The following proposition expresses how $\si$ and $\rho$ evolve along null curves generated by $U$.  

\begin{proposition} \label{prop:prop-conf}  Let $(M^3, g_t, U_t)$ evolve according to {\rm (\ref{coupled-evol})}, then the evolutions of $\si , \rho : I \times M^3 \ra \CC$ along null curves generated by $U$ are given by the formulae:
\begin{eqnarray}
(\pa_t + fU)\si & = &\left( \frac{1}{2}T(Z, \ov{Z}) + f\tau - \frac{f}{2}(\rho + \ov{\rho})\right) \si - \frac{1}{2}T(Z,Z)\big(\rho -2 U(\ln f)\big) \nonumber \\
  & & + Z(\ln f) \left( \frac{Z(f)}{2} - T(U, Z)\right) - \left( \frac{Z(f)}{2} + T(U, Z)\right) g(\na^g_UU, Z) \nonumber \\
 & & \qquad + \Gg ({}^{\Gg}R(\pa_t, Z)U, Z) - f\Ric^g (Z, Z)\,, \label{prop-conf}
\end{eqnarray}
\begin{eqnarray}
(\pa_t + fU)\rho & = & - \frac{1}{2}\big( f\rho + T(Z, \ov{Z})\big)\rho - \frac{1}{2} \left( f \ov{\si} + T(\ov{Z}, \ov{Z})\right) \si + \frac{1}{f}Z(\ln f)\ov{Z}(\ln f) \nonumber \\
  & & \qquad + \frac{1}{2} T(Z, \ov{Z})\tau - \Ric^{\Gg}(\pa_t, U) - f\Ric^g(U, U)\,. \label{prop-rho}
\end{eqnarray}
\end{proposition}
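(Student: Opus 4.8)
The plan is to differentiate the defining quantities $\si = \Gg(\na^{\Gg}_Z U, Z)$ and $\rho = \Gg(\na^{\Gg}_{\ov Z}U, Z)$ directly along the null field $W := \pa_t + fU$, using metric compatibility of $\na^{\Gg}$ together with the key fact from Lemma~\ref{lem:props-U}(ii) that $\na^{\Gg}_W U = 0$ identically along the flow. For $\si$ the Leibniz rule gives $W\si = \Gg(\na^{\Gg}_W\na^{\Gg}_Z U, Z) + \Gg(\na^{\Gg}_Z U, \na^{\Gg}_W Z)$, and I would rewrite the first summand by commuting the two covariant derivatives:
$$
\na^{\Gg}_W\na^{\Gg}_Z U = {}^{\Gg}R(W,Z)U + \na^{\Gg}_Z\na^{\Gg}_W U + \na^{\Gg}_{[W,Z]}U = {}^{\Gg}R(W,Z)U + \na^{\Gg}_{[W,Z]}U,
$$
the middle term vanishing because $\na^{\Gg}_W U \equiv 0$. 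The computation for $\rho$ is identical with $\ov Z$ in place of the first $Z$. Thus the whole calculation splits into a curvature part $\Gg({}^{\Gg}R(W,Z)U, Z)$ (resp.\ with $\ov Z$) and a first-order part coming from $\na^{\Gg}_{[W,Z]}U$ and the transport term $\na^{\Gg}_W Z$.

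I would treat the curvature part first, since it produces the terms that survive intact to \eqref{prop-conf}--\eqref{prop-rho}. Writing $W = \pa_t + fU$ splits $\Gg({}^{\Gg}R(W,Z)U,Z)$ into a genuinely space-time piece $\Gg({}^{\Gg}R(\pa_t, Z)U, Z)$, retained as stated, and a purely spatial piece $f\,\Gg({}^{\Gg}R(U,Z)U, Z)$. For the latter the Gauss equation of Lemma~\ref{lem:curv-Gg} trades the space-time curvature for the intrinsic curvature of $(M^3, g_t)$ up to correction terms quadratic in $\pa_t g$; since $U, X, Y$ are $g$-orthonormal, the three-dimensional identity expressing sectional curvatures through the Ricci tensor then gives $g(R^g(U,Z)U,Z) = -\Ric^g(Z,Z)$, yielding the term $-f\,\Ric^g(Z,Z)$. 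The Gauss corrections, being quadratic in $\pa_t g$, are re-expressed through $f$, $T$, $\si$ and $\rho$ by the evolution \eqref{coupled-evol}(ii). For $\rho$ the contraction is against $(\ov Z, Z)$, which is a partial trace over the complementary $2$-plane: the mixed piece collapses to $-\Ric^{\Gg}(\pa_t, U)$ and the spatial piece, again via Gauss and the three-dimensional identity $g(R^g(U,\ov Z)U,Z) = -\Ric^g(U,U)$, to $-f\,\Ric^g(U,U)$.

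It remains to evaluate the first-order part. Here I would expand the commutator $[W,Z] = \na^{\Gg}_W Z - \na^{\Gg}_Z W$ and the transport term $\na^{\Gg}_W Z = \na^{\Gg}_{\pa_t}Z + f\na^{\Gg}_U Z$ in the frame $\{U, Z, \ov Z, \pa_t\}$. Lemma~\ref{lem:comp}(i) supplies $\na^{\Gg}_E\pa_t$ for $E$ tangent to $M^3$, Corollary~\ref{cor:cov-der-U} supplies the $U$- and $\pa_t$-derivatives of the null field, and Lemma~\ref{lem:comp}(ii) together with $\tau = g(\na_U Z, \ov Z)$ identify the $\ov Z$-components of $\na^{\Gg}_{\pa_t}Z$ and $\na^{\Gg}_U Z$ as $\tfrac12 T(Z,\ov Z)$ and $\tau$. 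Projecting $\na^{\Gg}_Z U$ and $\na^{\Gg}_{\ov Z}U$ back onto $\{U, Z, \ov Z\}$ re-introduces $\si$ and $\rho$, so that these contributions assemble into the quadratic coefficients $\bigl(\tfrac12 T(Z,\ov Z) + f\tau - \tfrac f2(\rho + \ov\rho)\bigr)\si$ and $-\tfrac12 T(Z,Z)(\rho - 2U(\ln f))$ for $\si$, and into the $\rho^2$-, $\ov\si\,\si$- and $\tfrac12 T(Z,\ov Z)\tau$-terms for $\rho$; the gradient-of-$f$ terms such as $Z(\ln f)\bigl(\tfrac{Z(f)}2 - T(U,Z)\bigr)$ and $\tfrac1f Z(\ln f)\ov Z(\ln f)$ emerge from the $\grad_g f$ and $\dd f$ contributions in Corollary~\ref{cor:cov-der-U} and Lemma~\ref{lem:comp}(i).

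The step I expect to be the main obstacle is the curvature bookkeeping: verifying that the correction terms produced when the Gauss equation trades ${}^{\Gg}\Ric$ for ${}^g\Ric$ recombine \emph{exactly} with the commutator and transport contributions, leaving no residue. A closely related difficulty is controlling imaginary parts: $\rho$ is genuinely complex, its imaginary part measuring the non-integrability of the screen distribution, so the antisymmetric parts of the mixed curvature contraction and of the frame terms must be tracked carefully and shown to assemble into the stated (real) Ricci terms together with the quadratic $\si$- and $\ov\si$-contributions.
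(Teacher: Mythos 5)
Your proposal follows essentially the same route as the paper's proof: differentiate $\si$ (resp.\ $\rho$) by the Leibniz rule, commute derivatives via the Ricci identity so that the term $\na^{\Gg}_Z\na^{\Gg}_{\pa_t+fU}U$ drops out by Lemma \ref{lem:props-U}(ii), expand the commutator and transport terms in the frame $\{Z,\ov Z,U,\pa_t/f\}$ using Lemma \ref{lem:comp}, and convert the spatial curvature piece with the Gauss equation of Lemma \ref{lem:curv-Gg} together with the three-dimensional identity $\Ric^g(Z,Z)=-g({}^gR(U,Z)U,Z)$, re-expressing the $\pa_t g$ corrections through $f$ and $T$ via \eqref{coupled-evol}(ii). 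The paper carries out exactly this computation for $\si$ (and omits the analogous details for $\rho$, as you sketch), so your plan is faithful to it.
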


\begin{proof}  Unless stated otherwise, in what follows, the connection $\na$ is the Levi-Civita connection with repect to the metric $\Gg$.  Then
\begin{eqnarray*}
(\pa_t + fU)\si & = & \Gg (\na_{\pa_t + fU}\na_ZU, Z) + \Gg (\na_ZU, \na_{\pa_t + fU}Z) \\
 & = & \Gg ({}^{\Gg}R(\pa_t + fU, Z)U + \na_Z\na_{\pa_t+fU}U + \na_{[\pa_t+fU, Z]}U, Z) + \Gg (\na_ZU, \na_{\pa_t+fU}Z) \\
 & = & \Gg \left( {}^{\Gg}R(\pa_t+fU, Z)U + \na_{[\pa_t+fU, Z]}U, Z\right) + \Gg (\na_ZU, \na_{\pa_t+fU}Z)
\end{eqnarray*}
since by Lemma \ref{lem:props-U}(ii), $\na_{\pa_t+fU}U = 0$.  We now evaluate the various quantities in this expression.

The set of vector fields $\{ Z, \ov{Z}, U, \pa_t/f\}$ form a basis of the complexified tangent space $T^{\CC}(I\times M^3)$, with the properties $\Gg (Z,Z) = \Gg (\ov{Z}, \ov{Z}) = 0$ and $\Gg (Z, \ov{Z}) = 2$; the other vectors being mutually orthonormal.  Thus we may write
$$
\na_{\pa_t}Z = \al_1Z + \be_1 \ov{Z} + \ga_1 U + \delta_1 (\pa_t/f)\,,
$$
for some functions $\al_1, \be_1, \ga_1, \delta_1$, which we calculate as follows:
$$
\begin{array}{l}
\al_1 = \frac{1}{2}\Gg (\na_{\pa_t}Z, \ov{Z})= \frac{1}{2}T(Z, \ov{Z})\ ({\rm Lemma \ \ref{lem:comp}(ii)}), \\
 \be_1 = \frac{1}{2}\Gg (\na_{\pa_t}Z, Z) = 0 \\
\ga_1 = \Gg (\na_{\pa_t}Z,U) = - \Gg (Z, \na_{\pa_t}U) = \Gg (Z, f\na^g_UU)  \\
 \delta_1 = - \Gg \left(\na_{\pa_t}Z, \frac{\pa_t}{f}\right) = \frac{1}{f}\Gg \left( Z, \na_{\pa_t}\pa_t\right) = \frac{1}{f}\Gg (Z, fg^{kl}\pa_lf \pa_k) = \Gg (Z, \grad_gf)\,.
\end{array}
$$
Thus
$$
\na_{\pa_t}Z = \frac{1}{2}T(Z, \ov{Z}) Z + fg(Z, \na^g_UU)U + Z(f)\frac{\pa_t}{f}\,.
$$

Now set 
$$
\na_UZ = \al_2 Z + \be_2\ov{Z} + \ga_2U + \delta_2 (\pa_t/f)\,,
$$
for functions $\al_2, \be_2, \ga_2, \delta_2$, which are given by
$$
\begin{array}{l}
\al_2 = \frac{1}{2}\Gg (\na_UZ, \ov{Z})= \frac{1}{2}\tau , \quad \be_2 = 0 \\
\ga_2 = \Gg (\na_UZ,U) = - \Gg (Z, \na_UU)   \\
 \delta_2 = - \Gg \left(\na_UZ, \frac{\pa_t}{f}\right) = \frac{1}{f}\Gg \left( Z, \na_U\pa_t\right) = -\frac{1}{2}Z(\ln f) + \frac{1}{f}T(U,Z) \ ({\rm Lemma \ \ref{lem:comp}(i)})\,.
\end{array}
$$
Thus
$$
\na_UZ = \frac{1}{2}\tau Z - g(Z, \na^g_UU)U - \frac{1}{2f^2}\Big(Z(f) - 2T(U,Z)\Big) \pa_t\,,
$$
and we obtain
$$
\na_{\pa_t + fU}Z = \frac{1}{2}(T(Z, \ov{Z}) + f\tau )Z + \frac{1}{2f}\Big(Z(f) + 2T(U,Z)\Big) \pa_t\,.
$$
Proceeding in the same way, we obtain
$$
\na_ZU = \frac{1}{2}\ov{\rho} Z + \frac{1}{2}\si \ov{Z} - \frac{1}{2f^2}\Big(Z(f) - 2T(U,Z)\Big) \pa_t\,,
$$
to give
$$
\Gg \left( \na_ZU, \na_{\pa_t + fU}Z\right) = \frac{1}{2}(T(Z, \ov{Z}) + f\tau )\si + \frac{1}{f}\left( \frac{Z(f)^2}{4} - T(U,Z)^2\right)\,.
$$
Similarly, we obtain
$$
[\pa_t + fU, Z] = \frac{1}{2}(T(Z, \ov{Z})+f\tau  - f\ov{\rho})Z - \frac{1}{2}f\si \ov{Z} - \frac{1}{2}Z(f)U - \wh{T}(Z)\,,
$$
where we recall that $\wh{T}$ is the linear mapping determined at each point $x$ by $g(\wh{T}(E), F) = T(E,F)$ for all $E,F \in T_xM^3$. From this we deduce that
$$
\na_{[\pa_t + fU, Z]}U = \frac{1}{2}(T(Z, \ov{Z})+f\tau  - f\ov{\rho}) \na_ZU - \frac{1}{2}f\si \na_{\ov{Z}}U - \frac{1}{2}Z(f)\na_UU - \na_{\wh{T}(Z)}U\,,
$$
and so
\begin{eqnarray*}
\Gg \left(\na_{[\pa_t + fU, Z]}U, Z\right) & = & \frac{1}{2}\big( f\tau  - f(\rho +\ov{\rho})\big) \si - \frac{1}{2}T(Z, Z) \rho  \\
 & & - \left( \frac{Z(f)}{2} + T(U, Z) \right) g(\na_UU, Z)\,.
\end{eqnarray*}
  
On the other hand
\begin{eqnarray*}
\Ric^g (Z, Z) & = & \Ric^g (X, X) - \Ric^g (Y, Y) - 2\ii \Ric^g (X, Y) \\
 & = & g({}^gR(U, X)X, U) + g({}^gR)(Y, X)X,Y) - g({}^gR(U, Y)Y, U) \\
 & & \quad - g({}^gR(X, Y)Y, X) - 2\ii g({}^gR(U, X)Y, U) \\
 & = & - g({}^gR(U, Z)U, Z)\,.
\end{eqnarray*}
On applying Lemma \ref{lem:curv-Gg}, we deduce that
\begin{eqnarray*}
\Gg({}^{\Gg}R(U, Z)U, Z) & = & g({}^gR(U, Z)U, Z) + \frac{1}{4f^2}\Big( \pa_t g(Z, U)^2 - \pa_tg(U, U) \pa_tg(Z, Z)\Big) \\
 & = & - \Ric^g (Z, Z) + \frac{1}{f^2} \left( \left( \frac{Z(f)}{2} - T(U, Z) \right)^2 + U(f) T(Z, Z)\right) \,.
\end{eqnarray*}
Combining the various expressions, now gives equation (\ref{prop-conf}).  The calculation of $(\pa_t + fU)\rho$ proceeds in a similar manner; we omit the details.
\end{proof}

The striking property about equation (\ref{prop-conf}) is that the right-hand side is a quadratic form $q = q(Z, Z)$ in the complex vector $Z$.  Now any two quadratic forms on a 1-(complex) dimensional space are proportional.  In effect, if we replace $Z$ by the vector field $\wt{Z} = e^{\ii \theta} Z$, then $q(\wt{Z}, \wt{Z}) = e^{2\ii \theta} q(Z, Z)$ and similarly for any other quadratic form, so, since $\si = \si (Z, Z)$ is also a quadratic form, provided $\si (Z, Z) \neq 0$, the equation
$$
q(Z, Z) = \ga \si (Z, Z)
$$
is well-defined, independent of the choice of $Z$ and determines a well-defined function $\ga = \ga (t, x)$. However, we cannot make this assertion at points where $\si$ may vanish and so cannot claim in general that $(\pa_t + fU)\sigma$ is proportional to $\sigma$, which would allow us to conclude that if $\si$ vanished on a hypersurface, then it would vanish along integral curves of $\pa_t + fU$.  Instead we seek conditions when $q$ vanishes.  We can make this precise with the following consequence of Proposition \ref{prop:prop-conf} and Lemma \ref{lem:props-U}.  In what follows, we write $\Rr$ for the symmetric curvature tensor on $(M^3, g_t)$ defined by
$$
\Rr  = - f\Ric^g + \sym\Gg ({}^{\Gg}R(\pa_t, \,\cdot \, )U, \, \cdot \, )\,,
$$
where $(\sym S)(X,Y) = \frac{1}{2}\big( S(X,Y) + S(Y,X)\big)$ denotes the symmetrisation of an arbitrary covariant $2$-tensor.  Note that $\Rr$ depends on the ambient space-time manifold $(\Mm^4, \Gg )$ and the vector field $U$.  

\begin{theorem} \label{thm:main}
Let $(M^3, g_t, U_t)$ evolve according to the equations:
\begin{equation} \label{coupled-evol-thm}
\left\{ \begin{array}{lll}
{\rm (i)} \qquad \ds \frac{\pa U}{\pa t} & = & - f\na^g_UU + \grad_gf \\
{\rm (ii)} \qquad \ds\frac{\pa g}{\pa t} & = & - 2 \,\theta \odot \left(\dd f\vert_{TM}\right) + 2T
\end{array}\right.
\end{equation}
where $t$ belongs to a connected interval $I \subset \RR$, $T$ is a symmetric covariant $2$-tensor satisfying 
$$
\wh{T}(U) = - \frac{1}{2}\grad_gf + \frac{1}{2}U(f)U
$$
and where $f : I \times M^3 \ra \RR$ is a smooth function.  Then the integral curves of the vector field $\pa_t + fU$ are null geodesics with respect to the space-time metric $\Gg = - f^2 \dd t^2 + g_t$. If further
\begin{equation} \label{quad}
- \frac{1}{2}T(Z,Z)\big(\rho -2 U(\ln f)\big)+ \frac{1}{f}Z(\ln f)^2  + \Rr (Z, Z)=0\,, 
\end{equation}
where $Z = X - \ii Y$ is a complex vector field tangent to $M^3$ with $X, Y, U$ an orthonormal triple, then if $U_0$ is tangent to a conformal foliation on $(M^3, g_0)$, it remains tangent to a conformal foliation on $(M^3, g_t)$ for all future time $t$.
\end{theorem}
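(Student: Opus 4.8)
The first assertion requires no new work. The hypothesis $\wh{T}(U) = -\frac{1}{2}\grad_gf + \frac{1}{2}U(f)U$ is exactly the geodesic condition of Lemma \ref{lem:props-U}(iv), and substituting it into \eqref{coupled-evol}(i) reproduces \eqref{coupled-evol-thm}(i); hence $\pa_t + fU$ is geodesic, and being null its integral curves are null geodesics of $\Gg$. For the second assertion I would first translate ``$U_t$ tangent to a conformal foliation'' into the vanishing of the shear $\si = \frac{1}{2}(\Ll_Ug)(Z,Z)$, so that the statement to be proved becomes: $\si|_{t=0} = 0$ forces $\si \equiv 0$ along the null geodesics for all future $t$.

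The engine is the evolution formula of Proposition \ref{prop:prop-conf}, which I would read in the form $(\pa_t + fU)\si = A\,\si + q$, where
$$A = \tfrac{1}{2}T(Z,\ov{Z}) + f\tau - \tfrac{f}{2}(\rho + \ov{\rho})$$
is the coefficient multiplying $\si$ and $q = q(Z,Z)$ gathers all the $\si$-independent terms (the $T(Z,Z)$, $Z(\ln f)$, mean-curvature and curvature contributions). The whole plan reduces to one identity: under the standing geodesic condition, $q$ equals the left-hand side of \eqref{quad}, so that the hypothesis \eqref{quad} is precisely the assertion $q \equiv 0$. To obtain that identity I would argue algebraically. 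Since $X,Y,U$ is orthonormal we have $g(U,Z)=0$, hence $T(U,Z) = g(\wh{T}(U),Z) = -\tfrac{1}{2}Z(f)$. Feeding this single value into $q$ makes the factor $\tfrac{Z(f)}{2} + T(U,Z)$ vanish, so the mean-curvature contribution $\big(\tfrac{Z(f)}{2} + T(U,Z)\big)\,g(\na^g_UU,Z)$ disappears altogether, while $\tfrac{Z(f)}{2} - T(U,Z)$ collapses to $Z(f)$, leaving a single term quadratic in $Z(f)$ in place of the middle group --- the $Z(\ln f)$-contribution recorded in \eqref{quad}. The two curvature terms $\Gg({}^{\Gg}R(\pa_t,Z)U,Z)$ and $-f\Ric^g(Z,Z)$ assemble, by the very definition $\Rr = -f\Ric^g + \sym\Gg({}^{\Gg}R(\pa_t,\cdot)U,\cdot)$ (the symmetrisation being inert on the diagonal), into $\Rr(Z,Z)$. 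What survives is exactly the left-hand side of \eqref{quad}, confirming $q \equiv 0$.

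With $q \equiv 0$ the evolution becomes the homogeneous linear first-order equation $(\pa_t + fU)\si = A\,\si$ along each integral curve of $\pa_t + fU$. Expressed in the complementary frame, $\si$ is then a complex-valued function of the flow parameter solving $\dot\si = A\,\si$ with smooth coefficient $A$, so uniqueness for linear ordinary differential equations yields $\si \equiv 0$ on the whole curve once $\si = 0$ at its foot on $\{t=0\}$. Because the vanishing of the quadratic form $\si$ is independent of the frame, the conformal foliation persists for all future $t$, within the domain where the foliation remains nonsingular, exactly as in the closing argument of Proposition \ref{prop:char-SFR}.

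The crux, and the point I expect to demand the most care, is the subtlety flagged in the paragraph preceding the theorem: because $\si$ is a quadratic form that may vanish, one cannot divide the source by $\si$ and settle for a mere proportionality $(\pa_t + fU)\si = \ga\,\si$, since the function $\ga$ is then undefined at zeros of $\si$ and the propagation argument collapses. The role of \eqref{quad} is precisely to force $q$ to vanish \emph{identically} rather than only where $\si \neq 0$, upgrading the evolution to a genuine homogeneous linear equation. Verifying the algebraic reduction of $q$ --- in particular that the geodesic substitution annihilates the mean-curvature term and reproduces the $Z(\ln f)$ contribution of \eqref{quad} --- and tracking the frame-dependence so that ``$\si = 0$'' is an invariant statement, are the remaining technical obligations.
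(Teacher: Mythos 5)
Your proposal is correct and takes essentially the same route as the paper: there the theorem is presented as a direct consequence of Lemma \ref{lem:props-U}(iv) (the geodesic condition) and Proposition \ref{prop:prop-conf}, with \eqref{quad} being precisely the vanishing of the $\si$-independent part $q$ of \eqref{prop-conf} after substituting $T(U,Z)=-\tfrac{1}{2}Z(f)$, so that $\si$ satisfies a homogeneous linear first-order equation along the null curves and vanishes for all future time once it vanishes at $t=0$. One small point worth noting: the substitution actually yields $Z(\ln f)\,Z(f)=f\,Z(\ln f)^2$, whereas \eqref{quad} records $\tfrac{1}{f}Z(\ln f)^2$; this factor-of-$f^2$ discrepancy is an inconsistency in the paper itself (presumably $\tfrac{1}{f}Z(f)^2$ was intended), not a gap in your argument.
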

Note that although $T(U, \ \cdot \ )$ is determined, we are free to choose $T$ on the orthogonal complement of $U$ in $TM^3$.  However, $T$ is required to be a \emph{real} tensor and the complex coefficient $\rho = g(\na_{\ov{Z}}U, Z)$ means we cannot solve (\ref{quad}) in its most general form.  In what follows, we describe a number of special cases when solutions to \eqref{coupled-evol-thm} and \eqref{quad} exist.

\medskip

\noindent {\bf Case 1}:  \emph{Stable points of the evolution}.  These are given by the equations:
\begin{equation} \label{stat}
\left\{ \begin{array}{rcl}
\grad_g \ln f  & = & \na^g_UU \\
 T & = & \theta \odot \dd f\vert_{TM} 
\end{array}\right.
\end{equation} 
In particular, we must have $U(f) = 0$ and on writing $\mu = \na^g_UU$ for the mean curvature of the integral curves of $U$, necessarily $\mu$ is a gradient and $T = f\theta \odot \mu^{\flat}$.  If we now make the additional assumption that $U_0$ is tangent to a conformal foliation, then since $g_t$ and $U_t$ are stable (independent of $t$), then $U_t = U_0$ is also tangent to a conformal foliation and $\si$ must vanish identically.  It then follows from \eqref{prop-conf}, equivalently \eqref{quad}, that
$$
\Ric^g(Z, Z) = Z(\ln f)^2\,.
$$
In fact, given a conformal foliation on a Riemannian $3$-manifold, one can calculate the Ricci curvature in terms of invariants of the foliation, see \cite{Ba-Da}, Proposition 6.1.  One then obtains
\begin{eqnarray*}
\Ric^g(Z, Z) & = & \frac{1}{2}(\Ll_{\mu}g)(Z, Z) - \mu^{\flat}(Z)^2 \\
 & = & \na^g\dd \ln f(Z, Z) - Z(\ln f)^2\,.
\end{eqnarray*}
Combining these two expression for $\Ric^g(Z, Z)$ we have a necessary condition on $f$:
$$
\na^g\dd \ln f(Z, Z) = 2Z(\ln f)^2\,.
$$

Conversely, given a vector field $U = U_0$ which is tangent to a conformal foliation with mean curvature $\mu = \na^g_UU = \grad_g f$ a gradient.  Then on defining $T$ by \eqref{stat}, we have a stationary point of the evolution \eqref{coupled-evol-thm}.  To summarize: \emph{stationary points correspond to vector fields $U$ whose associated mean curvature vector field $\mu$ is a gradient; if $U$ is initially tangent to a conformal foliation, then it remains so}.

A special case is when $U$ has integral curves geodesic.  Then we can take $f \equiv 1$ and $T \equiv 0$.  In this case, if the integral curves of $U$ also form a conformal foliation, then locally, $U$ is tangent to the fibres of a harmonic morphism, that is a semi-conformal map which is also harmonic.  It is to be noted that then, necessarily $\Ric^g(Z, Z) = 0$, see \cite{Ba-Wo-6}.  Furthermore, if $U$ is a vector field which is tangent to a conformal foliation, then the property that $\mu$ be a gradient implies that there is a conformal deformation of the metric $g$ which renders the integral curves geodesic (see \cite{Ba-Wo-6}, Lemma 4.6.6).  The converse may not be true.  

\medskip

\noindent {\bf Case 2}:  \emph{The case when $(M^3, g_0)$ has constant curvature}.  In this case, we take $T\equiv 0$ and $f \equiv 1$.  Then by \eqref{coupled-evol-thm}, $\pa g/ \pa t = 0$ and $g_t = g_0$ for all $t$.  Since $\pa \Ga^k_{ij} / \pa t = 0$, by Lemma \ref{lem:curv-Gg}, it follows that $R^0_{ijk} = 0$, so that $\Gg ({}^{\Gg}R(\pa_t, Z, U, Z) = 0$.  On the other hand, by the hypothesis of constant curvature, $\Ric^g = cg$ for some constant $c$ and %\Ric^g(Z, Z) = 0$.  It follows that \eqref{quad} is staisfied and so, by Theorem \ref{thm:main}, the field determined by the evolution
$$
\frac{\pa U}{\pa t} = - \na^g_UU
$$
remains tangent to a conformal foliation. In the flat case, examples can be calculated from the twistor correspondence described in Section \ref{sec:2}.  

\begin{example} \label{ex:flat}  Consider the parametrized surface in $\CP^3$ given by 
$$
(z, w) \mapsto \left[ - \frac{\ii cz}{\sqrt{2}}, w, z, 1\right]\,,
$$
where $c \in \CC$ is non-zero.  The incidence relation $\xi_A = \ii x_{AA'} \eta^{A'}$ takes the form:
$$
\left\{ \begin{array}{rcl}
 - cz & = & vz - \ov{q} \\
 - \ii \sqrt{2} w & = & - qz + u
\end{array} \right.
$$
giving
$$
 z = \frac{\ov{q}}{v + c}\,.
$$
The spinor field tangent to the corresponding SFR is given by 
$$
\mu_A = \rho \left( \begin{array}{c}
 \ov{q} \\
v + c
\end{array} \right)
$$
for some complex-valued function $\rho$.  A semi-conformal map $\phi_t: \RR^3_t \ra \CC$ having the field $U = \frac{1}{1 + |z|^2}(|z|^2 - 1, 2z)$ tangent to its fibres, is given by
$$
\phi_t(x_1, q) = \frac{1}{2\sqrt{2}\ov{q}}\big( |x|^2 - 2x_1(c+t) + (t+c)^2\big)\,.
$$
By Corollary \ref{cor:evol-scm}, $\phi_t$ is uniquely determined up to post-composition with a conformal mapping of the plane $f_t$.  If we take $f_t$ to be the dilation $f_t(w) = \frac{2\sqrt{2}}{t^2}w$, then we obtain the semi-conformal mapping $\psi_t = f_t \circ \phi_t$, with the same fibres of $\phi_t$:
$$
\psi_t(x) = \frac{1}{t^2\ov{q}}\big( |x|^2 - 2x_1(c+t) + (t+c)^2\big)\,. 
$$  
In the limit as $t \ra \infty$, this converges pointwise to the semi-conformal mapping $\psi_{\infty} = \frac{1}{\ov{q}}$ having geodesic fibres.  This shows that in the limit as $t \ra \infty$, the corresponding foliation given by the integral curves of $U$ converges to a conformal foliation by geodesics.  This is as we would expect when convergence as $t \ra \infty$ occurs, since such a foliation is stable with respect to the evolution \eqref{coupled-evol-thm}.
\end{example}

\begin{example}  For non-zero integers $k$ and $l$, there is a smooth semi-conformal map $\phi : S^3 \ra S^2$ between Euclidean spheres of the form
$$
\phi (\cos (s) e^{\ii a}, \sin (s) e^{\ii b}) = (\cos \al (s), \sin \al (s) e^{\ii (ka + lb})\,,
$$
where $\al : [0, \pi / 2] \ra [0, \pi ]$ satisfies $\al (0) = 0, \, \al (\pi / 2) = \pi$ \cite{Ba-Wo-6}, Example 13.5.3.  If we use coordinates $(x^1, x^2, x^3) = (s, a, b)$ on $S^3$, then the canonical metric has the form
$$
g = \dd s^2 + \cos^2s \, \dd a^2 + \sin^2s \, \dd b^2\,.
$$
With respect to the basis $\pa / \pa s,\, \pa / \pa a,\, \pa /\pa b$ of the tangent space, the non-zero Christoffel synbols are given by
$$
\Ga^1_{22} = \sin s\, \cos s , \quad \Ga^1_{33} = - \sin s \, \cos s, \quad \Ga^2_{12} = - \tan s,\quad \Ga^3_{13} = \cot s\,,
$$
and the unit vector field tangent to the fibres of $\phi$ is given by
$$
U_0 = \frac{1}{\sqrt{k^2 \sin^2 s + l^2 \cos^2 s}}\left( - l\pa_a + k\pa_b\right)\,.
$$
Let us suppose that for each $t$, the vector field $U_t$ has the form $U_t = u\pa_s + v\pa_a + w\pa_b$ for functions $u,v,w$ which depend on $s$ and $t$ only.  Then a calculation shows that the equation $\ds \frac{\pa U}{\pa t} = - \na^g_{U_t}U_t$ is equivalent to the system:
$$
\left\{ \begin{array}{rcl}
\ds\frac{\pa u}{\pa t} & = & \ds - (v^2 - w^2) \sin s\, \cos s - u \frac{\pa u}{\pa s} \\
\ds\frac{\pa v}{\pa t} & = & \ds - u\frac{\pa v}{\pa s} + 2uv \, \tan s \\
\ds\frac{\pa w}{\pa t} & = & \ds - u\frac{\pa w}{\pa s} - 2uw \, \cot s\,,
\end{array}
\right.
$$
with initial conditions:
$$
u(s, 0) = 0, \quad v(s, 0) =  \frac{-l}{\sqrt{k^2 \sin^2 s + l^2 \cos^2 s}}, \quad w(s, 0) = \frac{k}{\sqrt{k^2 \sin^2 s + l^2 \cos^2 s}}\,.
$$
The theory of partial differential equations of first order guarantees the existence of a solution for small time.  Note that, even though $U_0$ is tangent to the fibres of the map $\phi : S^3 \ra S^2$, for later time, we cannot expect that there will be a \emph{globally defined} semi-conformal mapping on $S^3$ with $U_t$ tangent to its fibres.  We can only be sure that locally such a mapping exists.
\end{example}

\noindent {\bf Case 3}:  \emph{The case of integrable complementary distribution}. 
If, via the flow \eqref{coupled-evol-thm}, we evolve through vector fields tangent to a conformal foliation, then remarkably, the integrability of the complementary distribution is preserved, as described by the following lemma.

\begin{lemma} \label{lem:im-rho} Let $(M^3, g_t, U_t)$ evolve according to \eqref{coupled-evol-thm}, then
$$
(\pa_t + fU) (\rho - \ov{\rho}) = -\frac{1}{2}T(Z, \ov{Z})(\rho - \ov{\rho}) - \frac{1}{2}T(\ov{Z}, \ov{Z})\si + \frac{1}{2}T(Z,Z)\ov{\si}\,.
$$
In particular, if $\rho - \ov{\rho} = 2\ii g(U, [X, Y])$ vanishes at $t = 0$ and $\si \equiv 0$, then $\rho - \ov{\rho} = 0$ for future time.
\end{lemma}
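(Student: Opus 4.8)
The plan is to read off the evolution of $\rho - \ov\rho$ directly from the evolution of $\rho$ already established in Proposition \ref{prop:prop-conf}, by forming the complex conjugate of that equation and subtracting. The key structural observation is that $\pa_t + fU$ is a \emph{real} first-order operator and that $g$, $U$ and the tensor $T$ are real, while $\ov Z$ is the complex conjugate of $Z$; hence $\rho$ and $\ov\rho$ are genuinely conjugate scalar functions on $I\times M^3$ and $(\pa_t + fU)\ov\rho = \ov{(\pa_t + fU)\rho}$. So I would take \eqref{prop-rho}, conjugate it term by term, and use: that $T(Z,\ov Z)$, the Ricci contributions $\Ric^{\Gg}(\pa_t,U)$ and $\Ric^g(U,U)$, and $\frac{1}{f}Z(\ln f)\ov Z(\ln f) = \frac{1}{f}\big(X(\ln f)^2 + Y(\ln f)^2\big)$ are real; that $\si\mapsto\ov\si$ and $T(\ov Z,\ov Z)\mapsto T(Z,Z)$; and that $\ov\tau = -\tau$. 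This last fact follows because $\Gg(Z,\ov Z) = g(X,X)+g(Y,Y) = 2$ is constant, so $0 = U\big(\Gg(Z,\ov Z)\big) = \tau + \ov\tau$.

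Subtracting the conjugated equation from \eqref{prop-rho}, the symmetric ingredients cancel: the two Ricci terms and the squared-gradient term are real and appear identically in both equations, so they drop out; the quadratic terms in $\si$ recombine so that the $f\,\si\ov\si$ contributions cancel, leaving $-\tfrac{1}{2}T(\ov Z,\ov Z)\si + \tfrac{1}{2}T(Z,Z)\ov\si$; and the quadratic terms in $\rho$ together with the $\tau$-term recombine, using $\ov\tau = -\tau$ and the reality of $T(Z,\ov Z)$, to leave precisely $-\tfrac{1}{2}T(Z,\ov Z)(\rho-\ov\rho)$. Collecting these produces the stated identity. The step demanding the most care is this last recombination: one must verify that all the $f$-dependent quadratic contributions cancel against one another (and against the $\tau$-term) under antisymmetrisation, so that no surviving term carries a factor of $f$. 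I expect this to be the main obstacle, and it is exactly what makes the final formula so clean, since the right-hand side ends up weighted only by $T$.

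For the last assertion I would specialise to $\si \equiv 0$. Then the two $\si$-terms vanish and the identity collapses to $(\pa_t + fU)(\rho-\ov\rho) = -\tfrac{1}{2}T(Z,\ov Z)(\rho-\ov\rho)$, a linear homogeneous transport equation for $\rho-\ov\rho$ along the integral curves of $\pa_t + fU$, which by Lemma \ref{lem:props-U} are the null geodesics generated by $U$. Integrating along each such geodesic multiplies $\rho-\ov\rho$ by the nowhere-vanishing factor $\exp\!\big(-\tfrac{1}{2}\int T(Z,\ov Z)\big)$, so if $\rho-\ov\rho$ vanishes on the initial slice $t=0$ it vanishes throughout the future. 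Since $\rho-\ov\rho = \Gg(U,[Z,\ov Z]) = 2\ii\,g(U,[X,Y])$ measures the integrability of the screen distribution $U^{\perp}$, this says precisely that integrability of the complementary distribution is preserved by the flow as long as the foliation remains conformal. This propagation-by-uniqueness argument is the same device used at the end of Proposition \ref{prop:char-SFR}, and the expression $\rho-\ov\rho = \Gg(U,[Z,\ov Z])$ also provides an independent consistency check on the computation of the previous paragraph.
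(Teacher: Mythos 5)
Your route is the same as the paper's: the paper also obtains the displayed identity as a ``direct consequence'' of \eqref{prop-rho} (i.e.\ by conjugating that equation and subtracting, using that $\pa_t+fU$ is a real operator and $T$, $g$, $U$ are real), and it deduces the final assertion from uniqueness for a linear first-order ODE along the null curves, exactly as you do. Your subsidiary computations are also correct: $(\pa_t+fU)\ov\rho=\ov{(\pa_t+fU)\rho}$, the reality of $T(Z,\ov Z)$, of $\tfrac{1}{f}Z(\ln f)\ov Z(\ln f)$ and of the two Ricci terms, the recombination of the $\si$-terms, $\ov\tau=-\tau$, and the identification $\rho-\ov\rho=\Gg(U,[Z,\ov Z])=2\ii\,g(U,[X,Y])$.

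However, the step you yourself flag as ``the main obstacle'' and then assert without verification is a genuine gap, and in fact it fails if \eqref{prop-rho} is taken at face value. Subtracting the conjugate of \eqref{prop-rho} from \eqref{prop-rho}, the $\rho$-quadratic, $T(Z,\ov Z)\rho$ and $\tau$ contributions give
\[
-\tfrac{1}{2}f\bigl(\rho^2-\ov\rho^{\,2}\bigr)-\tfrac{1}{2}T(Z,\ov Z)(\rho-\ov\rho)+T(Z,\ov Z)\tau\,,
\]
so the stated lemma requires $T(Z,\ov Z)\tau=\tfrac{1}{2}f(\rho+\ov\rho)(\rho-\ov\rho)$. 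There is no such identity: $\rho+\ov\rho=-2\nu$ (minus twice the mean curvature of $U^{\perp}$) is generically nonzero, and $\tau=g(\na_UZ,\ov Z)$ is not even frame-invariant --- under $Z\mapsto e^{\ii\theta}Z$ it shifts by $2\ii\,U(\theta)$, whereas $\rho$, $T(Z,\ov Z)$, $T(Z,Z)\ov\si$ and the left-hand side are all invariant. This frame-dependence is precisely the warning sign: a lone $\tau$-term cannot legitimately survive in an equation for the invariant quantity $(\pa_t+fU)\rho$, so either \eqref{prop-rho} holds only in a particular frame gauge (or contains a slip), or the lemma cannot be reached by bare conjugation; indeed, if one redoes the computation along the lines of the proof of Proposition \ref{prop:prop-conf}, the two $\tau$-contributions --- one from $\na_{[\pa_t+fU,\ov Z]}U$, one from $\Gg(\na_{\ov Z}U,\na_{\pa_t+fU}Z)$ --- cancel each other, and no isolated $\tau$ remains. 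Note also that the residual $T(Z,\ov Z)\tau$ would be an inhomogeneous source in your transport equation: it does not vanish when $\si\equiv 0$ and $\rho-\ov\rho=0$, so your integrating-factor argument for the ``in particular'' clause would collapse as well. The paper's one-line proof glosses over exactly the same point, but a self-contained proof cannot: the cancellation has to be established by an actual computation (after correcting or rederiving the $\rho$-evolution), not asserted from the printed form of \eqref{prop-rho}.
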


\begin{proof}  The formula is a direct consequence of \eqref{prop-rho}, under the assumption of Theorem \ref{thm:main}, that $\wh{T}(U) = - \frac{1}{2}\grad_gf + \frac{1}{2}U(f)U$.  The conclusion then follows as a consequence of the uniqueness of solutions of a first order ODE for given initial conditions.
\end{proof}

On taking $f\equiv 1$, this enables us to give a satisfactory form of the evolution \eqref{coupled-evol-thm} in the case when initially the complementary distribution is integrable.

\begin{theorem}  \label{thm:integrable}  Let $(M^3, g_t, U_t)$ evolve according to the equations:
\begin{equation} \label{coupled-evol-int}
\left\{ \begin{array}{lll}
{\rm (i)} \qquad \ds \frac{\pa U}{\pa t} & = & - \na^g_UU  \\
{\rm (ii)} \qquad \ds\frac{\pa g}{\pa t} & = & -\frac{4}{\nu}\left\{ \Rr - 2\Rr(U, \, \cdot \, )\odot \theta + \Rr (U, U) \theta^2\right\}\,,
\end{array}\right.
\end{equation}
where $\nu = g(U, \na^g_XX + \na^g_YY)$ is the mean curvature of the horizonal distribution $(U^{\perp})$ which we assume non-zero at all points of $M^3$, and where $\Rr = - \Ric^g + \sym\Gg ({}^{\Gg}R(\pa_t, \,\cdot \, )U, \, \cdot \, )$\,.  Then if $U_0$ is tangent to a conformal foliation on $(M^3, g_0)$ with integrable complementary distribution $U^{\perp}$, it remains tangent to a conformal foliation on $(M^3, g_t)$ for all future time $t$ for which $\rho$ remains everywhere non-zero.  Furthermore, the complementary distribution remains integrable and the integral curves of the vector field $\pa_t + fU$ are null geodesics with respect to the space-time metric $\Gg = - f^2 \dd t^2 + g_t$. 
\end{theorem}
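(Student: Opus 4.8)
The plan is to avoid invoking Theorem \ref{thm:main} directly: its extra hypothesis \eqref{quad} will turn out to hold only on the locus where the complementary distribution is integrable, which is exactly what we wish to prove is preserved. Instead I would exhibit a \emph{closed, linear, homogeneous} system of ODEs along the null curves generated by $U$, satisfied by the pair $(\si,\rho-\ov{\rho})$, whose initial data vanish. First I would record the consequences of $f\equiv 1$: then $\grad_g f=0$ and $U(f)=0$, so the standing hypothesis $\wh{T}(U)=-\frac12\grad_g f+\frac12 U(f)U$ reads $\wh{T}(U)=0$. A direct evaluation of $T=-\frac{2}{\nu}\{\Rr-2\Rr(U,\,\cdot\,)\odot\theta+\Rr(U,U)\theta^2\}$ on $(U,E)$, using $\theta(U)=1$, confirms $T(U,\,\cdot\,)=0$; hence by Lemma \ref{lem:props-U}(iv) the field $\pa_t+U$ is geodesic, which already proves the final assertion of the theorem. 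Evaluating the same $T$ on a complementary frame $Z$ (where $\theta(Z)=0$) gives the key identity $T(Z,Z)=-\frac{2}{\nu}\Rr(Z,Z)$.

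The heart of the matter is to feed these facts into Proposition \ref{prop:prop-conf}. With $f\equiv 1$ one has $Z(\ln f)=U(\ln f)=Z(f)=0$ and $T(U,Z)=0$, so equation \eqref{prop-conf} collapses to
\[
(\pa_t+U)\si=\Big(\tfrac12 T(Z,\ov{Z})+\tau-\tfrac12(\rho+\ov{\rho})\Big)\si-\tfrac12 T(Z,Z)\rho+\Rr(Z,Z),
\]
where I have used that the symmetrization in the definition of $\Rr$ is invisible on the diagonal, so $\Gg({}^{\Gg}R(\pa_t,Z)U,Z)-\Ric^g(Z,Z)=\Rr(Z,Z)$. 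The crucial step is to rewrite the inhomogeneous term. Substituting $T(Z,Z)=-\frac{2}{\nu}\Rr(Z,Z)$ and recalling from $g(\na^g_{\ov{Z}}U,Z)=-g(U,\na^g_XX+\na^g_YY-\ii[X,Y])$ that $\rho=-\nu+\ii\,g(U,[X,Y])$, whence $1+\rho/\nu=(\rho-\ov{\rho})/(2\nu)$, one finds
\[
-\tfrac12 T(Z,Z)\rho+\Rr(Z,Z)=\Big(1+\tfrac{\rho}{\nu}\Big)\Rr(Z,Z)=\tfrac{\rho-\ov{\rho}}{2\nu}\,\Rr(Z,Z).
\]
Thus the $\si$-equation has become homogeneous in the pair $(\si,\rho-\ov{\rho})$: the curvature source is now proportional to the integrability defect $\rho-\ov{\rho}$.

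Finally I would close the system using Lemma \ref{lem:im-rho}, which (with $\wh{T}(U)=0$) expresses $(\pa_t+U)(\rho-\ov{\rho})$ as a homogeneous linear combination of $\rho-\ov{\rho}$, $\si$ and $\ov{\si}$. Together with the complex conjugate of the $\si$-equation above, this yields a first-order linear homogeneous ODE system in $(\si,\ov{\si},\rho-\ov{\rho})$ along each integral curve of $\pa_t+U$, with smooth coefficients on the time interval where $\nu$ (equivalently $\rho$) stays non-zero. Since $U_0$ tangent to a conformal foliation with integrable $U^{\perp}$ means $\si=0$ and $\rho-\ov{\rho}=2\ii\,g(U,[X,Y])=0$ at $t=0$, uniqueness for linear ODEs forces $\si\equiv 0$ and $\rho-\ov{\rho}\equiv 0$ for all future time, giving simultaneously the persistence of the conformal foliation and of integrability. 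I expect the main obstacle to be exactly the circular interdependence flagged at the outset --- conformality persistence needs \eqref{quad}, which needs integrability, whose persistence (Lemma \ref{lem:im-rho}) needs conformality --- and the device that resolves it is the algebraic identity $1+\rho/\nu=(\rho-\ov{\rho})/(2\nu)$, which converts the two conditional statements into a single unconditional linear homogeneous system amenable to ODE uniqueness.
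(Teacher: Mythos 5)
Your proposal is correct, and its skeleton is the same as the paper's: take $f\equiv 1$, check $\wh{T}(U)=0$ so that Lemma \ref{lem:props-U}(iv) gives the null geodesic property, then combine \eqref{prop-conf} with Lemma \ref{lem:im-rho} into a first-order linear homogeneous ODE system for $(\si,\ov{\si},\rho-\ov{\rho})$ along the null curves and conclude by uniqueness from vanishing initial data. The genuine difference lies at the point you flag as circular, and there your treatment is sharper than the paper's own. The paper's proof picks $T$ proportional to $\Rr/\rho$ precisely so that \eqref{quad} holds identically, and accordingly displays the $\si$-equation with no source term; but $T$ must be real, the tensor actually used in the flow \eqref{coupled-evol-int} is the one built from $\nu$, and for that real tensor \eqref{quad} holds only where $\rho=-\nu$, i.e.\ exactly on the integrability locus. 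The paper acknowledges the reality issue ("we require that $\rho$ remain real") but never justifies why the source term may be dropped for the real flow. Your identity
$$
-\tfrac12 T(Z,Z)\rho+\Rr(Z,Z)=\Bigl(1+\tfrac{\rho}{\nu}\Bigr)\Rr(Z,Z)=\tfrac{\rho-\ov{\rho}}{2\nu}\,\Rr(Z,Z)
$$
supplies exactly the missing justification: the source is proportional to the integrability defect $\rho-\ov{\rho}$, so the system closes up homogeneously without ever invoking \eqref{quad}, and the circular interdependence (conformality needs \eqref{quad}, which needs integrability, whose persistence needs conformality) dissolves. In short, you arrive at the same coupled system the paper writes down, but derived honestly from the real evolution stated in the theorem; this is the more rigorous version of the argument. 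The caveats that remain are common to both proofs: the ODE coefficients presuppose a solution of \eqref{coupled-evol-int}, whose existence neither argument addresses (the paper concedes this is delicate), and $\nu$ must stay non-vanishing for $T$ to be defined, which is what the theorem's proviso on $\rho$ amounts to once $\rho=-\nu$ is re-established.
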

 
\begin{proof}  The result is a consequence of Theorem \ref{thm:main}, on taking $f \equiv 1$ and $T$ to be the tensor 
$$
- \frac{2}{\rho}\left( \Ric^g + \sym\Gg ( {}^{\Gg}R(\pa_t,\, \cdot \, )U, \, \cdot \, )\right) = - \frac{2}{\rho}\Rr\,,
$$
normalised in order that $\wh{T}(U) = 0$. 
For, with these assumptions, the requirement \eqref{quad} is satisfied.  However, we require that $T$ remain \emph{real}, equivalently, that $\rho$ remain real.  Now by \eqref{prop-conf}, Lemma \ref{lem:im-rho}, and our choice of $f$ and $T$, we have the coupled system:
$$
\left\{ \begin{array}{rcl}
(\pa_t+fU)\si & = & \left( \frac{1}{2}T(Z, \ov{Z}) + \tau - \frac{1}{2}(\rho + \ov{\rho})\right) \si \\
(\pa_t + fU) (\rho - \ov{\rho}) & = & -\frac{1}{2}T(Z, \ov{Z})(\rho - \ov{\rho}) - \frac{1}{2}T(\ov{Z}, \ov{Z})\si + \frac{1}{2}T(Z,Z)\ov{\si}\,,
\end{array} \right.
$$
with initial conditions $\si = 0$ and $\rho - \ov{\rho} = 0$.  Once more, by the uniqueness of solutions of a first order ODE for given initial conditions, it must be the case that $\si$ and $\rho - \ov{\rho}$ remain zero for future time.  We now set $\nu$ equal to the real quantity $- \rho$.
\end{proof} 

\begin{remark} We can deal with the case when $\rho \equiv 0$ in a similar way, but now choosing $f$ to be a function such that $U(f)$ is nowhere vanishing (where this is possible).  Then equation \eqref{quad} requires that we take $T$ to be the tensor
$$
- \frac{1}{U(f)} \left\{ \frac{1}{f}(\dd\ln f)^2 + \Rr\right\}
$$
normalized so that $\wh{T}(U) = 0$.
\end{remark}

It is interesting that \eqref{coupled-evol-int}(ii) is a modified Ricci flow (depending on the influence of the term $\Gg ({}^{\Gg}R(\pa_t, \, \cdot \, )U, \, \cdot \,)$ in $\Rr$), coupled to the evolution of the vector field $U$.  Some examples of Ricci solitons evolve under the Ricci flow through semi-conformal mappings \cite{Ba-Da}.  To prove existence of solutions to the system \eqref{coupled-evol-int}, even for short time, would seem to be quite tricky.  The term $\Gg ({}^{\Gg}R(\pa_t, \, \cdot \, )U, \, \cdot \,)$ that appears in the tensor $\Rr$ in general contains terms involving the derivative of the metric $g$; this will affect the nature of the equations.  The following example gives an indication of how this might happen.  Note that the example, consisting of a conformal foliation by \emph{geodesics}, also fits into Case 1, and with the choice $f \equiv 1$, $T \equiv 0$, would correspond to a stationary evolution.  However, with the imposition of the evolution of the metric given by Theorem \ref{thm:integrable}, the metric will evolve in time.

\begin{example}  We consider a radial metric on a domain of $\RR^3$ of the form
$$
g = a(t, r)^2 \dd r^2 + b(t, r)^2 g_{S^2} = a^2 \dd r^2 + b^2 (\dd s^2 + \sin^2s\, \dd \theta^2)\,,
$$
where the coordinates correspond via $(x^1, x^2, x^3) = (r,s, \theta )$, with $(s, \theta )$ standard coordinates on $S^2$.  Then the $r$-curves form a conformal foliation by goedesics with unit tangent given by $U = \frac{1}{a}\frac{\pa}{\pa r}$.  The non-zero Christoffel symbols are given by
\begin{eqnarray*}
\Ga^1_{11} = \pa_r \ln a,& \  \Ga^1_{22} = - \frac{b\pa_rb}{a^2}, \ \Ga^1_{22} = - \frac{b\sin^2s\,\pa_r b}{a^2},\ \Ga^2_{12} = \pa_r \ln b , \ \Ga^3_{13} = \pa_r \ln b, \\
&  \Ga^2_{33} = - \sin s \cos s , \ \Ga^3_{23} = \cot s\,.
\end{eqnarray*}
The tensor $\Rr$ then has the following non-zero components:
\begin{eqnarray*}
\Rr_{11} & = & 2\left(\pa_{rr}{}^2\ln b - \pa_r\ln a\, \pa_r\ln b + (\pa_r\ln b)^2\right) \\
\Rr_{13} & = & \cot s\,\pa_r\ln b \\
\Rr_{22} & = & - 1 + \frac{1}{a}\pa_r \left( \frac{b\pa_rb}{a}\right) + \frac{b}{a}\left( \pa_{tr}{}^2b - \pa_t\ln a\, \pa_r b\right) \\
\Rr_{33} & = & \sin^2s \, \Rr_{22}\,.
\end{eqnarray*}
An orthonormal triple containing $U$ is given by $\{ X, Y, U\} = \{ \frac{1}{b}\frac{\pa}{\pa s}, \frac{1}{b\sin s}\frac{\pa}{\pa \theta}, U\}$.  A calculation now gives
$$
\nu = g(U, \na^g_XX + \na^G_YY) = - \frac{2}{ab}\pa_rb\,.
$$
The evolution equation \eqref{coupled-evol-int} is then just the single equation:
$$
b\pa_{tr}{}^2 b - \pa_tb\,\pa_rb - b\pa_t\ln a\, \pa_rb = a - \pa_r \left( \frac{b\pa_rb}{a}\right)\,.
$$
Stationary points, i.e. time-independant solutions require
$$
 a - \pa_r \left( \frac{b\pa_rb}{a}\right) = 0\,.
$$
But now we can suppose that $r$ is parametrised so that $a(r) \equiv 1$.  Then up to replacing $r$ by $r+c$ for a constant $c$, we have $b = r$, which corresponds to the Euclidean metric.
\end{example}

\section{The corresponding field equations and conformal invariance} \label{sec:final} We conclude with some general remarks concerning the application of the above to the description of massless fields.  Our perspective has been to couple the evolving field, which we suppose represented by the vector field $U$ (or the corresponding semi-conformal map $\phi$), with the gravitational field $g$.  Thus the two should evolve simultaneously following some well-defined equations.  In order to arrive at these equations, we have adapted the flat case (Minkowski space) to more general curved metrics.  From the physical point of view, we do not wish to ``presuppose" the space-time manifold $(\Mm^4, \Gg )$, but rather to introduce a natural dynamic into the pair $(g_t, U_t)$ based on the wish to preserve the conformality of the foliation and the geodesity of the null curves determined by $U$.  As such, we would like to be able to see semi-conformal mappings as deriving from a variational principle.  In fact, this is the case if the mapping has values in a surface.

Let $\phi : (M^m, g) \ra (N^2, h)$ ($m\geq 2$) be a smooth mapping of a Riemannian manifold into a Riemannian surface.  Then there are at most two non-zero eigenvalues $\la_1{}^2, \la_2{}^2$ of the first fundamental form $\phi^*h$, which we can suppose satisfy $\la_1 \geq \la_2 \geq 0$.  Then $\phi$ is semi-conformal if and only if $\la_1 = \la_2$ at each point.  On the other hand, the \emph{energy density} is the function:
$$
e_{\phi} = \frac{1}{2}\Tr (\dd \phi \circ \dd \phi^*) = \frac{1}{2}(\la_1{}^2 + \la_2{}^2)\,,
$$
where $\dd\phi^*$ denotes the adjoint of the derivative, and the \emph{Jacobian} the function:
$$
j_{\phi} = \sqrt{\det ( \dd \phi \circ \dd \phi^*)} = \la_1 \la_2\,.
$$
Clearly $e_{\phi} - j_{\phi} \geq 0$ with equality if and only if $\la_1 = \la_2$, i.e. if and only if $\phi$ is semi-conformal.  If $M^m$ is now compact (otherwise we restrict to compact domains), it follows that $\phi$ is semi-conformal if and only if it is absolutely minimizing with value zero for the functional:
$$
I(\phi ) = \int_M(e_{\phi} - j_{\phi})^{m/2} dv_g\,,
$$
where $dv_g$ is the volume element on $M^m$ and where the power $m/2$ assures the conformal invariance of $I$ with respect to conformal deformations of $g$.  It is not known whether in any given homotopy class, the functional $I$ admits an absolute minimum and if so whether it is always zero. 

We can obtain a time-dependent differential equation governing the evolution of $\phi$ in a similar way to Corollary \ref{cor:evol-scm} for the flat case.  On differentiating the equation $\dd \phi_t(U_t)$ with respect to $t$ and applying \eqref{coupled-evol-thm}(i), we obtain
\begin{eqnarray*}
\dd \left(\frac{\pa \phi_t}{\pa t}\right) (U) & = & - \dd \phi_t(U_t) = \dd\phi_t(\na^g_UU) - \dd \phi_t(\grad_gf)\\
 & = & - \tau (\phi_t) - \dd \phi_t(\grad_gf)\,;
\end{eqnarray*}
the latter equality coming from the fundamental equation \eqref{fund-scm} for a semi-conformal map.  We can once more integrate this along integral curves of $U$ to obtain:
$$
\frac{\pa \phi_t}{\pa t}(x) = - \int_{s = 0}^{s_0} \left\{\tau (\phi_t) + \dd \phi_t(\grad_gf)\right\} (c(s))\,\dd s\,,
$$
where the start point $c(0)$ lies on a chosen hypersurface of $M^3$ which the integral curves of $U$ hit transversally and $x = c(s_0)$.  We can view this equation as a modified Schr\"odinger equation, or heat equation, with potential given by the function $f$. 

A further observation is as follows.  If $U$ is tangent to a conformal foliation on a $3$-manifold $(M^3, g)$, now possibly with singularities, and $\phi : M^3 \ra N^2$ is a semi-conformal map with $U$ tangent to its fibres, then where it is smooth, the vector field $\la^2 U$ is divergence free, where $\la$ denotes the dilation of $\phi$.  In particular, if $S\subset M^3$ is a closed surface which avoids singularities of $U$ (and of $\phi$), then by Stokes Theorem, the quantity:
$$
Q(\Ss ) = \int_S\la^2g(U, n)\dd v_{S}\,,
$$
is independent of the homology class of $S$ in $M^3 \setminus \Si$, where $\Si$ represents the singular set of $U$ \cite{We}.  Furthermore, it is easily checked that the integral $Q(\Ss )$ is conformally invariant with respect to conformal deformations of the metric $g$.  For example, if $\phi : \RR^3 \ra S^2$ is radial projection given by $x \mapsto x/|x|$, with a singularity at the origin and $S$ is taken to be any sphere surrounding the origin, then $Q(S ) = 4\pi$.  We then expect $Q(S )$ to represent some intrinsic quantity associated to the field given by $\phi$.

One of the features of a shear-free ray congruence on Minkowsi space is its conformal invariance.  Thus if $\mu^A$ satisfies equation \eqref{SFR} with respect to the Minkowski metric $g_{\MM^4}$, then it does so with respect to any conformally equivalent metric $\Om^2g_{\MM^4}$, where $\Om : \MM^4 \ra \RR$ is a smooth non-vanishing function.  Our equations \eqref{coupled-evol-SFR}, \eqref{coupled-evol-thm} and \eqref{coupled-evol-int} do not have this general invariance property.  This is because of our requirement that the vector field $U$ be parallel along null curves generated by $U$, which is not a conformally invariant property.  Explicity, if we consider a conformal deformation $\wt{\Gg} = \Om^2 \Gg$ of the space-time metric $\Gg = - f^2 \dd t^2 + g_t$, then for arbitrary vector fields, we have the relation:
$$
\na^{\wt{\Gg}}_EF = \na^{\Gg}_EF + E(\ln \Om )F + F(\ln \Om )E - \Gg (E,F) \grad_{\Gg}\ln \Om\,.
$$
If we now consider the new quantities $\wt{U} = U/\Om$, $\wt{f} = \Om f$, then
$$
\na^{\wt{\Gg}}_{\pa_t + fU} \wt{U} = \frac{1}{\Om}\na^{\Gg}_{\pa_t + fU}U + \frac{1}{\Om}U(\ln \Om )(\pa_t + fU) - \frac{f}{\Om}\grad_{\Gg}\ln \Om\,,
$$
so that both $U$ and $\wt{U}$ are parallel along the null curves they generate (with respect to $\Gg$ and $\wt{\Gg}$, respectively) if and only if 
$$
\left\{ \begin{array}{rcl}
(\pa_t + fU)(\Om ) & = & 0 \\
 U(\Om )U - \grad_g\Om & = & 0\,. 
\end{array} \right.
$$
The second equation is quite restrictive.  It says that the gradient of $\Om$ must be parallel to $U$, but then if $\Om$ is non-constant, this means that the complementary distribution $U^{\perp}$ is integrable.  In particular, if $U^{\perp}$ is not integrable, then the only conformal deformation possible has $\Om$ constant.  On the other hand, if $U^{\perp}$ is integrable, then any smooth nowhere vanishing function $\Om$ which is constant along the integral surfaces of the distribution $U^{\perp}$ which also satisfies $(\pa_t + fU)(\Om ) = 0$ ($\Om$ is constant along the null curves generated by $U$) is an allowable conformal deformation.  It is easily checked that the equations \eqref{coupled-evol-SFR}, \eqref{coupled-evol-thm} and \eqref{coupled-evol-int} are invariant with respect to such conformal deformations. 

However, we take the view that such a \emph{global conformal invariance} should not be a requirement of our model.  The concept of \emph{conformal foliation}, or \emph{semi-conformal map}, is a conformal invariant, so we can modify our initial conditions $(M^3, g_0, U_0)$ by any conformal deformation: $(M^3, \wt{g}_0 = \om^2 g_0, \wt{U}_0 = \frac{1}{\om}U_0)$ and then allow the approprate evolution to take place.  This is consistent with our perspective that the triple $(M^3, g_0, U_0)$ has an implicit dynamic associated to it.  We could then view the conformal invariance of the initial conditions as a \emph{local conformal invariance}.


\begin{thebibliography}{}
\bibitem{Ba-1} P. Baird, Riemannian twistors and Hermitian structures
on low-dimen\-sional space forms. {\it J. Math. Phys.}, {\bf  33}, (1992) 3340--55.


\bibitem{Ba-Da} P. Baird and L. Danielo, \emph{Three-dimensional Ricci solitons which project to surfaces}, J. reine angew. Math. {\bf 608} (2007), 65-91. 

\bibitem{Ba-Wo-5} P. Baird and J.C. Wood,  {\it Harmonic morphisms
and shear-free ray congruences.}  (2002), available from:\\
\centerline{\small\tt http://www.maths.leeds.ac.uk/Pure/staff/wood/BWBook/BWBook.html}


\bibitem{Ba-Wo-6} P. Baird and J.C. Wood, {\it Harmonic Morphisms between Riemannian Manifolds.}   London Math. Soc. Monograph, Oxford University Press, 2003.

\bibitem{Ba-Is} R. Bartnik and J. Isenberg, \emph{The constraint equations}.  The Einstein equations and the large scale behaviour of gravitational fields, ed. P. T. Chrusciel and H. Friedrich, Birkh\"auser, Basel, 2004, pp 1-39.

\bibitem{Ch-etal} B. Chow, S-C. Chu, D. Glickenstein, C. Guenther, J. Isenberg, T. Ivey, D. Knopf, P. Lu, F. Luo and L. Ni, The Ricci flow: Techniques and Applications, Part 1: Geometric aspects, AMS Mathematical Surveys and monographs, {\bf 135}, 2007.

\bibitem{Di} J. Dieudonn\'e, Foundations of Modern Analysis, Academic Press, New York, San Fransisco, London, 1969.

\bibitem{Ea} M.G. Eastwood, Introduction to Penrose transform, The Penrose Transform and Analytic Cohomology in Representation Theory, Proceedings of an AMS-IMS-SIAM conference held at Mount Holyoke College, Massachusetts, Cont. Math. vol. 154, Amer. Math. Soc. 1993, pp. 71--75.

\bibitem{Hu-To} S. Huggett and P. Todd, Introduction to Twistor Theory, London Math. Soc. Lecture Notes, Cambridge University Press 1985.

\bibitem{Pe-1} R. Penrose,   Twistor algebra.
{\it J. Math. Physics}, {\bf 8}, (1967) 345--66.  

\bibitem{Pe-2} R. Penrose, Solutions of the zero-rest-mass equations. {\it J. Math. Phys.}, {\bf 10}, No. 1, (1969) 38-39.

\bibitem{Pe-3} R. Penrose, Twistor Theory, its aims and achievements. {\it In: Quantum Gravity}, C.J. Isham, R. Penrose and D.W. Sciama, eds., Calrendon Press, Oxford 1975, 268-407. 

\bibitem{Pe-Ri-1} R. Penrose and  W. Rindler, 
{\it Spinors and space-time. Vol.~1. Two-spinor calculus
and relativistic fields},
Cambridge Monographs on Mathematical Physics,
2nd edn.  1987 (1st edition, 1984).
  
\bibitem{Pe-Ri-2} R. Penrose and W. Rindler, 
{\it Spinors and space-time. Vol.~2.  Spinor and twistor
methods in space-time geometry}.
Cambridge Monographs on Mathematical Physics,
2nd edn.  1988 (1st edition, 1986).
Cambridge University Press, Cambridge and New York.

\bibitem{Va} I. Vaisman, {\em Conformal foliations}, Kodai Math. J., {\bf 2}
(1979), 26-37.

\bibitem{Wa-We} R. S. Ward and R. O. Wells, R.~O., Jr. (1990).
{\it Twistor geometry and field theory}.
Cambridge Monographs on Mathematical Physics.
Cambridge University Press 1990, Cambridge and New York.

\bibitem{We} M. Wehbe, \emph{Aspects twistoriels des applications semi-conformes}, Thesis, Universit\'e de Bretagne Occidentale, in preparation.


\bibitem{Wo} J. C. Wood, \emph{Harmonic morphisms and Hermitian structures on Einstein $4$-manifolds}, Internat. J. Math., {\bf 3}, (1992) 415-439.

\end{thebibliography}
\end{document}